\newtheorem{theorem}{Theorem}
\newtheorem{definition}{Definition}
\DeclareMathOperator*{\argmax}{arg\,max}
\begin{document}

\title{How Can Incentives and Cut Layer Selection Influence Data Contribution in Split Federated Learning?}

 \author{Joohyung~Lee, ~\IEEEmembership{Senior Member,~IEEE,} Jungchan Cho, Wonjun Lee,~\IEEEmembership{Fellow,~IEEE}, \\Mohamed Seif, and H. Vincent Poor, ~\IEEEmembership{Life Fellow,~IEEE}

\thanks{J. Lee and J. Cho are with the School of Computing, Gachon University, Seongnam 13120, Rep. of Korea (e-mail: j17.lee@gachon.ac.kr; thinkai@gachon.ac.kr).}
\thanks{W. Lee is with the School of Cybersecurity, Korea University, Seoul, Rep. ok Korea (e-mail:wlee@korea.ac.kr).}
\thanks{M. Seif and H. V. Poor are with the Department of Electrical and Computer Engineering, Princeton University, Princeton, NJ 08544 USA (e-mail: mseif@princeton.edu; poor@princeton.edu).}
 \thanks{Corresponding author: Jungchan Cho.}}

\markboth{Journal of \LaTeX\ Class Files,~Vol.~14, No.~8, August~2021}%
{Shell \MakeLowercase{\textit{et al.}}: A Sample Article Using IEEEtran.cls for IEEE Journals}


\maketitle

\begin{abstract}
To alleviate the training burden in federated learning while enhancing convergence speed, Split Federated Learning (SFL) has emerged as a promising approach by combining the advantages of federated and split learning. However, recent studies have largely overlooked competitive situations. In this framework, the SFL model owner can choose the cut layer to balance the training load between the server and clients, ensuring the necessary level of privacy for the clients. Additionally, the SFL model owner sets incentives to encourage client participation in the SFL process. The optimization strategies employed by the SFL model owner influence clients' decisions regarding the amount of data they contribute, taking into account the shared incentives over clients and anticipated energy consumption during SFL. To address this framework, we model the problem using a hierarchical decision-making approach, formulated as a single-leader multi-follower Stackelberg game. We demonstrate the existence and uniqueness of the Nash equilibrium among clients and analyze the Stackelberg equilibrium by examining the leader's game. Furthermore, we discuss privacy concerns related to differential privacy and the criteria for selecting the minimum required cut layer. Our findings show that the Stackelberg equilibrium solution maximizes the utility for both the clients and the SFL model owner.
\end{abstract}

\begin{IEEEkeywords}
Split Federated Learning, Wireless Networking, Game Theory, Incentive Mechanism, Cut Layer Selection.
\end{IEEEkeywords}

\section{Introduction}
Recent innovations in Distributed Collaborative Machine Learning (DCML) have opened new possibilities for developing scalable and accurate Artificial Intelligence (AI) solutions. By leveraging diverse datasets across multiple clients without transmitting personal data to a central server, these methods significantly reduce the risk of privacy leakage. Federated learning (FL) and split learning (SL) are two popular approaches, though both face practical limitations \cite{Lee2024, Thapa2022,Abhishek2019}:
\begin{itemize}
    \item FL: FL allows a server model to be trained in parallel across many clients, but each client is required to run an entire model, resulting in a significant training burden. This leads to high battery consumption on clients, significant client-side computation requirements, and limited model privacy.
    \item SL: SL splits the entire model into smaller portions, allowing only a shallow sub-network to be trained on the client side, thus reducing the training burden. However, the training process is sequential, and client models are not aggregated, leading to significant time overhead.
\end{itemize}

To overcome these limitations, a new distributed learning framework, Split Federated Learning (SFL), has recently been proposed. SFL combines the advantages of both FL and SL, aiming to reduce the training burden of FL while enhancing convergence speed. More specifically, in SFL, clients train only a portion of the entire model, known as the client-side model, which decreases their computational load. These client-side models are then synchronized to enhance convergence speed via FL process \cite{Thapa2022,Houda2023}.

However, despite its notable advantages, SFL introduces additional communication overhead when interacting with servers and also raises privacy concerns due to the frequent exchange of client-side model outputs and updates, which are correlated to raw data. Optimizing SFL management to address these challenges remains a complex issue \cite{Lee2024,Thapa2022}. Hence, there have been various approaches to managing SFL in the literature, aiming to achieve efficient communication in SFL operations. These approaches consider factors such as cut layer selection, clustering, management of radio and computational resources (e.g., CPU/GPU), and gradient compression \cite{Han2021, Thapa2022, Wu2023, Yujia2023, Lin2024}.

Nevertheless, the aforementioned work assumes that clients are motivated to voluntarily participate in the SFL process without any incentives. In reality, participating in the SFL process requires substantial energy consumption, which depends on the complexity of the client-side model and the amount of data contribution to train the model from the client's perspective. Therefore, without suitable incentives that take into account the cut layer selection (i.e., the complexity of the client-side model), motivating clients to engage in the SFL process to generate an accurate model is challenging. Although there have been various studies on incentive mechanisms in FL (see, for example, \cite{Zhan2022} and the references therein), there exist no studies specifically addressing SFL's unique characteristics as follows. 
i) Competition arises between the SFL model owner, who aims to create a global model from the SFL process for AI services, and the clients, who participate in the SFL process. Additionally, given the incentives, there is further competition among clients striving to earn more rewards. Thus, understanding the clients' reactions to the optimal decisions and behaviors of the SFL model owner is crucial. 
ii) This necessitates a careful study of sophisticated incentive mechanisms and cut layer selection, considering how these competitive behaviors can influence the data contributions of clients. For instance, if the client-side model size increases, resulting in higher model complexity, clients will face a greater training burden. This may lead them to reduce their data contributions unless they receive sufficient incentives.

To the best of our knowledge, our work is the first to examine the competitive situation between clients and the SFL model owner during the SFL process. Our aim is to design an integrated approach to cut layer selection and incentive management from the SFL model owner's perspective, as well as data contribution management from the clients' perspective. Our contributions are summarized as follows:
\begin{itemize}
    \item We investigate the interactions between the SFL model owner and clients, which create competitive situations during the SFL process. This is formulated as a single-leader and multi-follower game.
    \item To achieve this, we develop analytical models to explore the trade-offs for the SFL model owner, including satisfaction with the acquired training data, load reduction, and payment for incentives. For the clients, we examine the balance between the energy consumption required for the SFL process and the anticipated incentives.
    \item We derive closed-form expressions for the Nash equilibrium among heterogeneous clients and develop algorithms to obtain Stackelberg strategies by proving the existence of these solutions. Additionally, we discuss privacy enhancements related to cut layer selection and differential privacy.
    \item Our findings show that the Nash and Stackelberg strategies are desirable operating points for all participants (i.e., the SFL model owner and clients) in competitive situations. From our classification experiments using the CIFAR-10 dataset, we highlight the importance of incentive mechanisms in accelerating SFL and enhancing model accuracy. In our experiment, accuracy varies from 60\% to 91\% depending on the incentives provided. Specifically, we evaluate the effectiveness of these strategies under various parameter settings, which motivates the design of incentive mechanisms for practical SFL applications.
\end{itemize}

The remainder of this paper is organized as follows. Section II reviews related work on cut layer selection and resource management for SFL. Section III presents the system model and formulates the utility functions for both the SFL model owner and the clients. In Section IV, we describe the competitive game between the SFL model owner and the clients, rigorously formulating the problem as a Stackelberg game. We then characterize the closed-form solutions for a Nash Equilibrium of the clients and provide the algorithm to obtain a Stackelberg Equilibrium. Section V provides the performance evaluation. Section VI discusses the impact of cut layer selection and differential privacy on the privacy level, offering guidelines for setting the minimum cut layer requirements during the SFL process. Finally, Section VII concludes the paper and suggests directions for future research.

 \begin{figure*}       
	\centering	\includegraphics[width=400pt,keepaspectratio]{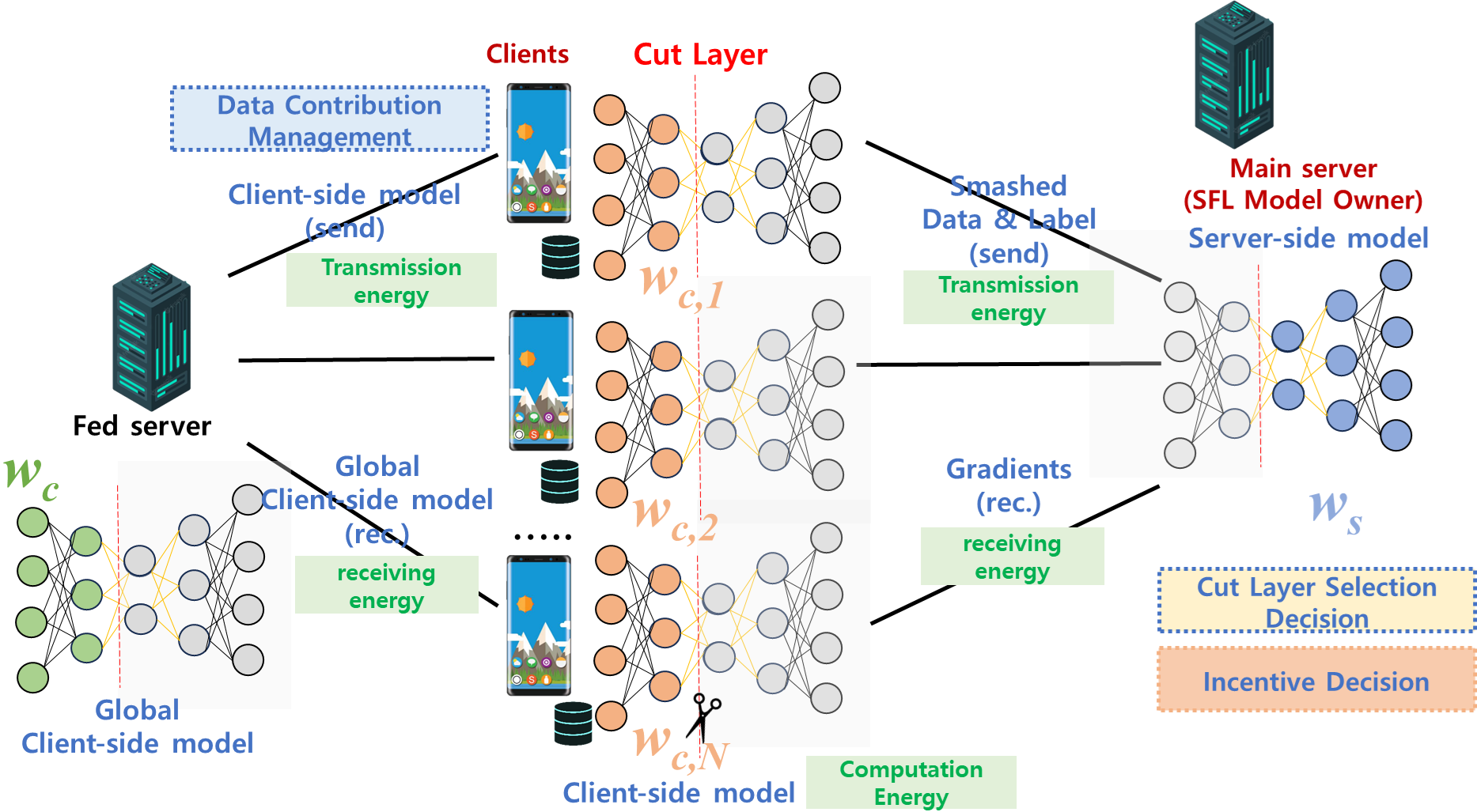}
    \caption{Concept of the proposed Split Federated Learning: It considers interactions between the model owner and clients. The SFL model owner determines the cut layer and incentives, while the clients manage data contribution, considering their energy consumption.}
    \label{fig:Figure_sfl_framework}
   \end{figure*}

\section{Related Work}
To mitigate the training burden in FL and improve convergence speed, SFL has emerged as a promising method, integrating the benefits of federated and split learning. However, SFL is associated with substantial communication overhead, mainly due to the transfer of smashed data, gradients, and model updates between clients, the main server, and the federated server. This necessitates enhancements \cite{Yujia2023}. Recent studies have focused on the impact of cut layer selection on system performance and privacy. Notably, \cite{Han2021, Thapa2022} developed analytical models to explore the effect of total model training time (latency) on the cut layer point. Based on these analyses, various strategies have been proposed to enhance communication efficiency in SFL \cite{Yujia2023, Han2021, Wu2023}. For instance, \cite{Han2021} introduced an optimal cut layer selection method based on latency analysis, followed by a convergence analysis of the SFL framework. Extending this work, \cite{Wu2023} examined the latency of SFL in wireless networks and designed a joint optimization problem involving cut layer selection, client clustering, and bandwidth allocation to minimize overall training latency. Similarly, \cite{Zhu2024} proposed a joint method for cut layer selection and server computation resource allocation for clients. \cite{Yujia2023} employed an auxiliary network for local updates of client-side models, maintaining only a single server-side model to reduce storage costs on the main server, and conducted a comprehensive analysis of communication costs, storage space, and convergence. Additionally, \cite{Lin2024} focused on reducing the dimensions of activations' gradients for backpropagation to decrease communication overhead. This study also explored joint optimization strategies for subchannel allocation, power control, and cut layer selection in wireless networks, aiming to minimize per-round latency. The recent work of \cite{Solat2024} utilized SFL for Unmanned Aerial Vehicles (UAVs) networks considering wireless network perspectives. It generated the mobile traffic prediction model from UAVs collaboratively by selecting the cut layer in SFL for energy-efficient training.

In addition to these efficiency considerations, SFL poses privacy concerns due to interactions between clients and servers, including both the main and federated servers. During the forward phase, clients transmit smashed data to the main server, which becomes vulnerable to reconstruction attacks, potentially compromising the privacy of the original information. Furthermore, exchanging model updates with the federated server introduces additional risks to raw data privacy. To address these issues, several privacy-preserving mechanisms have been proposed. For instance, a recent study \cite{Kim2020} on split learning empirically investigated the impact of cut layer selection on reconstruction attacks during the forward phase. The findings suggested that having deeper layers on the client side, which involve more non-linear functions, compresses raw data by eliminating less informative features, thereby enhancing resistance to reconstruction attacks. Similarly, \cite{Vepakomma2019} conducted empirical studies on how the distance correlation between raw and smashed data affects privacy leakage in split learning, emphasizing the importance of cut layer selection. Recently, the work in \cite{Lee2024a} analyzed the trade-off between energy consumption and privacy levels in relation to cut-layer selection in SFL. Furthermore, the work of \cite{Lee2024b} introduced an innovative cut layer selection scheme aimed at reducing the overall training latency in SFL, taking into account wireless network perspectives. Specifically, the proposed scheme not only minimized the overall training latency but also ensured an acceptable level of privacy by setting boundary constraints on the cut layer selection.

However, none of these studies address the strategic mechanisms of the SFL model owner and clients considering their competitive behaviors, which is one of our contributions. Notably, our approach to formulating and analyzing a Stackelberg game with two solution concepts involving the SFL model owner and clients is not fully covered in previous work.

 \begin{figure}       
	\centering	\includegraphics[width=260pt,keepaspectratio]{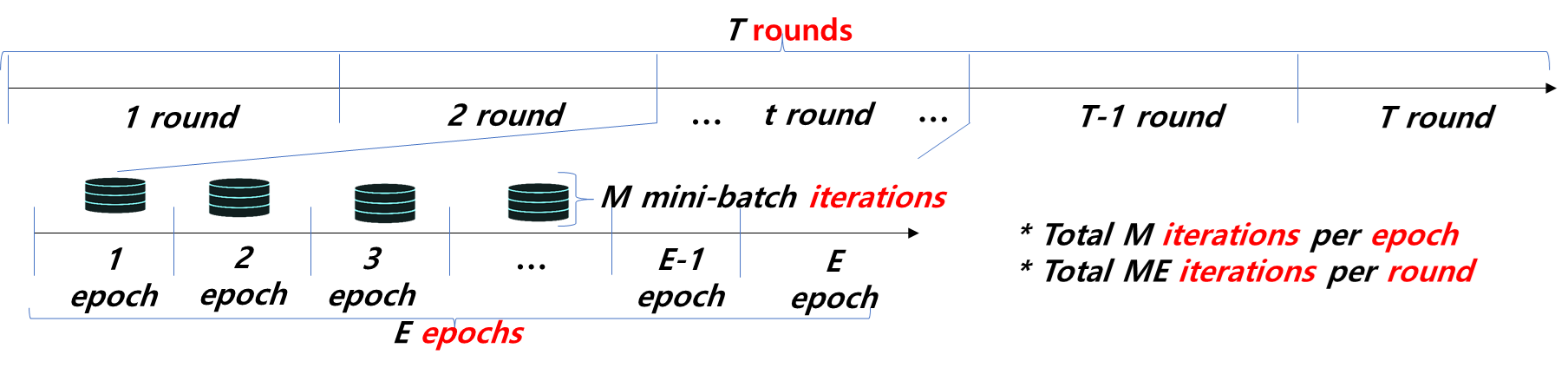}
    \caption{Timeline of the Overall SFL Process.}
    \label{fig:Timeline_of_SFL}
   \end{figure}

\section{System Model}
The SFL framework consists of two servers (which can be either physical or logical servers): (i) a federated learning server, referred to as the fed server, and (ii) a main server, along with multiple clients participating as part of the SFL process. In this setup, we define a set of clients as $\mathcal{N} = {1, 2, \cdots, N}$, where each client is represented as $n \in \mathcal{N}$ and has its own local dataset $\mathcal{D}_n$. The size of each dataset $\mathcal{D}_n$ is indicated by the number of data items, denoted as $D_n = |\mathcal{D}_n|$, with each data item comprising both features and labels. The entire model, denoted as $w$, is composed of $L$ layers. Specifically, the model $w$ is divided into two sub-models at the $L_c$-th layer, referred to as the cut layer. These sub-models are: (i) the client-side model $w_c$, including layers from the first layer up to layer $L_c$, and (ii) the server-side model $w_s$, encompassing layers from layer $L_c+1$ to layer $L$. Thus, the full model can be represented as $w = [w_c; w_s]$. As shown in Fig. \ref{fig:Figure_sfl_framework}, each client $n$ in the set $\mathcal{N}$ maintains its client-side model, denoted as $w_{c,n}$. These client-side models are trained in parallel using the clients' respective local datasets, through interactions with the server-side model $w_s$ hosted on the main server. To synchronize the client-side models, clients participate in federated learning, coordinated by the fed server, to construct a global client-side model $w_c$ in a collaborative manner \cite{Huang2023}. 

Finally, the entire SFL process is carried out over $T$ rounds. At the beginning of each round $t$, clients are provided with the latest global client-side model $w_c$ from the fed server. Following this, both the clients and the main server engage in training over $E$ epochs, updating their respective client-side models $w_{c,n}$ for each client $n \in \mathcal{N}$, and the server-side model $w_s$. Then, each client $n$ randomly samples a mini-batch instance $\zeta_n$ from its dataset $\mathcal{D}_n$. Note that during each epoch, clients randomly partition their datasets into $M$ mini-batches. Stochastic gradient descent (SGD) is applied to each mini-batch, allowing clients to update their client-side models $w_{c,n}$ and the main server to update the server-side model $w_s$ in a coordinated fashion. Once the $E$ epochs are concluded, the fed server collects the updated client-side models $w_{c,n}$ from all clients and combines them to form an updated global client-side model $w_c$\footnote{At mini-batch instance $\zeta_n$, $l_n(w; \zeta_n)$ denotes the loss of the model $w$ over client $n$'s mini-batch instance $\zeta_n$, which can be expressed as follows:
\begin{equation}\label{eqn:f_n(w,zeta)}
l_n(w; \zeta_n) = h(w_s; u(w_{c,n}; \zeta_n)).
\end{equation}
Here, $u$ is a function on the client side that transforms the input data from the sampled mini-batch instance $\zeta_n$ into the activation space, producing the output of the client-side model $w_{c,n}$ (i.e., smashed data). Meanwhile, $h$ is a function on the server side that takes the activation space as input and produces a single scalar value representing the loss.}. A detailed timeline of overall SFL process is summarized in Fig. \ref{fig:Timeline_of_SFL}.

\subsection{Utility Function of SFL Model Owner}
In our scenario, the SFL model owner determines the cut layer $L_c$ and the incentives $R$ provided to clients based on their data contributions, with the goal of maximizing the owner's utility function. In this context, we consider the following utility function:
\begin{align} \label{eqn:utility_MO}
  U_{MO}(R, L_c, \textbf{d})=\tau_1 \ln\left(1+\frac{\sum_{n = 1}^N d_n}{d_{req}}\right)+\tau_2 \frac{f_{FLOPs}(L_c)}{w_{FLOPs}} - R,
\end{align}
where the vector $\mathbf{d} = (d_1, d_2, \dots, d_N)$ contains the data contribution of each client from its local dataset $\mathcal{D}_n$, while $d_{req}$ denotes the required amount of aggregated data contribution necessary to achieve an acceptable accuracy, which is a service-dependent requirement. $\tau_1, \tau_2\ge 0$ are weights for trading of the first and second terms. Moreover, $w_{FLOPs}$ is the computation workload to proceed with the full model $w$ during its forward and backward propagation for a single data sample, measured in Floating Point Operations Per Second (FLOPS). Here, $f_{FLOPs}(L_c)$ as a function of the computation workload, measured in FLOPS, of the client-side model $w_c$ with respect to $L_c$. In this context, $f_{FLOPs}(L_c)$ is an increasing function of $L_c$ since the computation workload increases with the number of layers. Nevertheless, due to the difficulty of analytically characterizing this model-dependency, a data-driven approach can be applied to model its relationship through offline training using empirical measurements. Hence, we adopt regression-based modeling, which is one of the most widely used approaches for modeling mobile CPU properties, such as CPU power and temperature variation modeling \cite{Wang2020, Liu2018a}. Consequently, as shown in Fig. \ref{fig:FLOPs_Params}, we conclude that $f_{FLOPs}(L_c)$ is an affine function of $L_c$, modeled as $f_{FLOPs}(L_c)=aL_c + b$, where $a$ and $b$ are coefficient values\footnote{We used the thop package to measure the forward FLOPs and the model Params: https://pypi.org/project/thop/. Backword FLOPs are assumed to be twice the forward FLOPs.}. Finally, this utility function quantifies the balance between the satisfaction derived from aggregated data contribution from clients, and the gain from distributing training workload to clients, against the expenditure $R$ allocated to clients during the SFL process. The range of $R$ is set by the SFL model owner within $(0, R_{\max}]$, where $R_{\max}$ represents the upper limit of incentives regulated by the SFL model owner's policy. Note that in the right-hand side of (\ref{eqn:utility_MO}), the first term $\ln(1+\frac{\sum_{n = 1}^N d_n}{d_{req}})$ suggests that the satisfaction gained from aggregated data contribution may follow the law of diminishing returns. From \cite{Niyato_2016}, the amount of raw data plays an important role in the quality of data analytics (e.g., accuracy). Moreover, numerous studies demonstrate that model accuracy increases concavely with the overall dataset size \cite{Luo2020, Kim2022, Doyeon2023}. Accordingly, a concave, strictly increasing, and continuously differentiable function can represent this term. In this paper, we adopt a logarithmic utility function for this purpose, as it is widely used in the literature to quantify user satisfaction with diminishing returns. 

\subsection{Utility Function of Clients}
Our design of the utility function for the clients comprises two terms. The first term denotes the incentive allocated by the SFL model owner, while the second term represents the energy expenditure incurred from participating in the SFL process.

 \begin{figure}       
	\centering	\includegraphics[width=220pt,keepaspectratio]{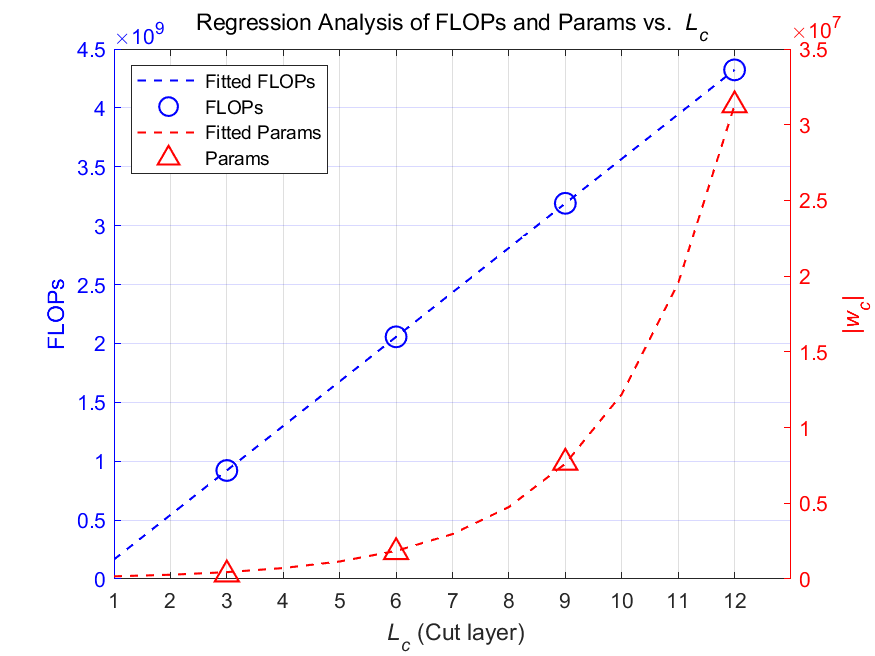}
    \caption{Regression analysis of computation load and the number of model parameters with varying $L_c$.}
    \label{fig:FLOPs_Params}
   \end{figure}

\subsubsection{Overall computation energy of the client}
Using $f_{FLOPs}(L_c)$ to model the computation time for each client $n$, we obtain
computation latency for the client-side model to process the mini-batch instance $\zeta_k$ can be obtained by
\begin{equation} \label{computation_time}
   \footnotesize t_{n} = \frac{d_n f_{FLOPs}(L_c)}{Mf_n \kappa}, \forall n \in \mathcal{N},
    \end{equation}
where the size of the mini-batch instance $\zeta_k$ is denoted by $|\zeta_k| = d_n/M$. Furthermore, $f_n$ denotes the central processing unit (CPU) capability of client $n$, and $\kappa$ represents the computing intensity. Consequently, the energy consumption for computing the client-side model for client $n$ is given by $E_{cmp, n} = p_n^c t_{n}$, where $p_n^c$ is the computation power of client $n$.

\subsubsection{Overall communication energy of the client}
To streamline our analysis and focus on the core intuition, we adopt a simplified approach similar to \cite{Tran2019}. This involves assuming fixed uplink and downlink channels, maintaining a quasi-static channel mode throughout the SFL process facilitating efficient analysis. Then, transmission latencies for transmitting the smashed data from client $n$ to the main server and for receiving the gradient data from the main server to the client $n$, are given by:
\begin{equation}\label{eqn:t_main}
t_{s:n}= \frac{s_n}{r_{s:n}}, \quad  t_{g:n} = \frac{g_n}{r_{g:n}}, \quad \forall n \in \mathcal{N}.
\end{equation}
Here, $s_n$ and $g_n$ denote the size of the smashed data and the gradient data of client $n$, respectively\footnote{Here, we assume that the size of the smashed data is constant, although it could be model-dependent. For the evaluation, we set the size of the smashed data by averaging the size across layers to simplify the analysis.}. Additionally, $r_{s:n}$ and $r_{g:n}$ represent the uplink transmission rate from client $n$ to the main server and the downlink transmission rate from the main server to client $n$, respectively.
Similarly, transmission latencies for transmitting the client-side model from client $n$ to the fed server and for receiving the global client-side model from the fed server to the client $n$, are given by:
\begin{equation}\label{eqn:t_main}
t_{cm:n}= \frac{m|w_c|}{r_{cm:n}}, \quad  t_{gcm:n} = \frac{m|w_c|}{r_{gcm:n}}, \quad \forall n \in \mathcal{N}.
\end{equation}
Here, $r_{cm:n}$ and $r_{gcm:n}$ represent the uplink transmission rate from client $n$ to the fed server and the downlink transmission rate from the fed server to client $n$, respectively. $m$ represents the number of bits for a single model parameter, where $|w_c|$ is the total number of model parameters in the client-side model. Note that the size of the global client-side model is the same as that of the individual client-side model. In this context, $|w_c|$ is the function of $L_c$ since the number of model parameters increases with the number of layers. Similar to $f_{FLOPs}$, due to the difficulty of analytical modeling, we also adopted regression-based modeling.  
Consequently, as shown in Fig. \ref{fig:FLOPs_Params}, we conclude that $|w_c|$ is a convex function of $L_c$, modeled as $|w_c|=c\exp{(dL_c)}$, where $c$ and $d$ are coefficient values. Then, with given $p^t_{n}$ and $p^r_{n}$ of transmission and receiving power consumption of client $n$, the overall communication energy for interacting with main server $E_{com-m:n}$ and fed server $E_{com-f:n}$ can be calculated by
\begin{equation}\label{eqn:energy_com_m}
E_{com-m:n}= p^t_{n}t_{s:n}+p^r_{n}t_{g:n}, \quad \forall n \in \mathcal{N}.
\end{equation}
\begin{equation}\label{eqn:energy_com_f}
E_{com-f:n}= p^t_{n}t_{cm:n}+p^r_{n}t_{gcm:n}, \quad \forall n \in \mathcal{N}.
\end{equation}
\subsubsection{Overall energy consumption}
The energy consumption for $E$ epochs of client $n$ with $M$ mini-batches is given by $E_{n}^{epoch} = EM(E_{cmp,n} + E_{com-m,n})$, $n \in \mathcal{N}$. Consequently, the overall energy consumption for $T$ rounds, covering the entire duration until the SFL system completes its operations, is given by
\begin{equation} \label{eqn:energy_tot}
  \footnotesize  E_{tot,n} = T ( E_{n}^{epoch} + E_{com-f:n}), n \in \mathcal{N}.
  \vspace{-5pt} 
\end{equation}
Then, in order to demonstrate the influence of $L_c$ and $d_n$ on $E_{tot,n}$, (\ref{eqn:energy_tot}) can be simplified as 

\begin{equation} \label{eqn:energy_tot_v1}
  \footnotesize  E_{tot,n} = T ( (d_nf_{FLOPs}(L_c)A_n + B_n) + c\exp{(dL_c)}C_n), n \in \mathcal{N},
  \vspace{-1pt} 
\end{equation}
  \vspace{-1pt} 

where $A_n = E\frac{p^c_n }{f_n \kappa}$, $B_n=EME_{com-m,n}$, and $C_n=\frac{mp^t_{n}}{r_{cm:n}}+\frac{mp^r_{n}}{r_{gcm:n}}$. Similarly, (\ref{eqn:energy_tot_v1}) can be further simplified to highlight only the impact of $d_n$ on $E_{tot,n}$, which is given by

\begin{equation} \label{eqn:energy_tot_v2}
  \footnotesize  E_{tot,n} = d_n H_n + I_n, n \in \mathcal{N},
  \vspace{-1pt} 
\end{equation}
  \vspace{-1pt} 

where $H_n=Tf_{FLOPs}(L_c)A_n$ and $I_n=T(B_n+C_nc\exp{(dL_c)})$.
\subsubsection{Shared incentives and utility function design}
Adhering to the principle of proportional sharing, the SFL model owner allocates an amount of incentives $R$ to each client $n \in \mathcal{N}$ in proportion to their data contributions. Accordingly, the specific amount of incentives $R_n$ allocated to client $n$ by the SFL model owner is determined as
\begin{align} \label{eqn:R_n}
R_n=R\frac{d_n}{\sum_{l = 1}^N d_l}.
\end{align}
Thus, the utility function of client $n$ is given by
\begin{align} \label{eqn:utility_client}
U_n(\textbf{d}, R, L_c)&=\psi_n R_n- E_{tot,n} + S, \nonumber\\,
                     &=\psi_n R\frac{d_n}{\sum_{l = 1}^N d_l}-d_n H_n - I_n + S, 
\end{align}
where $\psi_n (> 0)$ is a weighting factor for the incentives of client $n$. And $S$ is an offset to $U_n$ as a baseline constant term, representing the satisfaction of clients using the SFL model.

 \begin{figure}       
	\centering	\includegraphics[width=220pt,keepaspectratio]{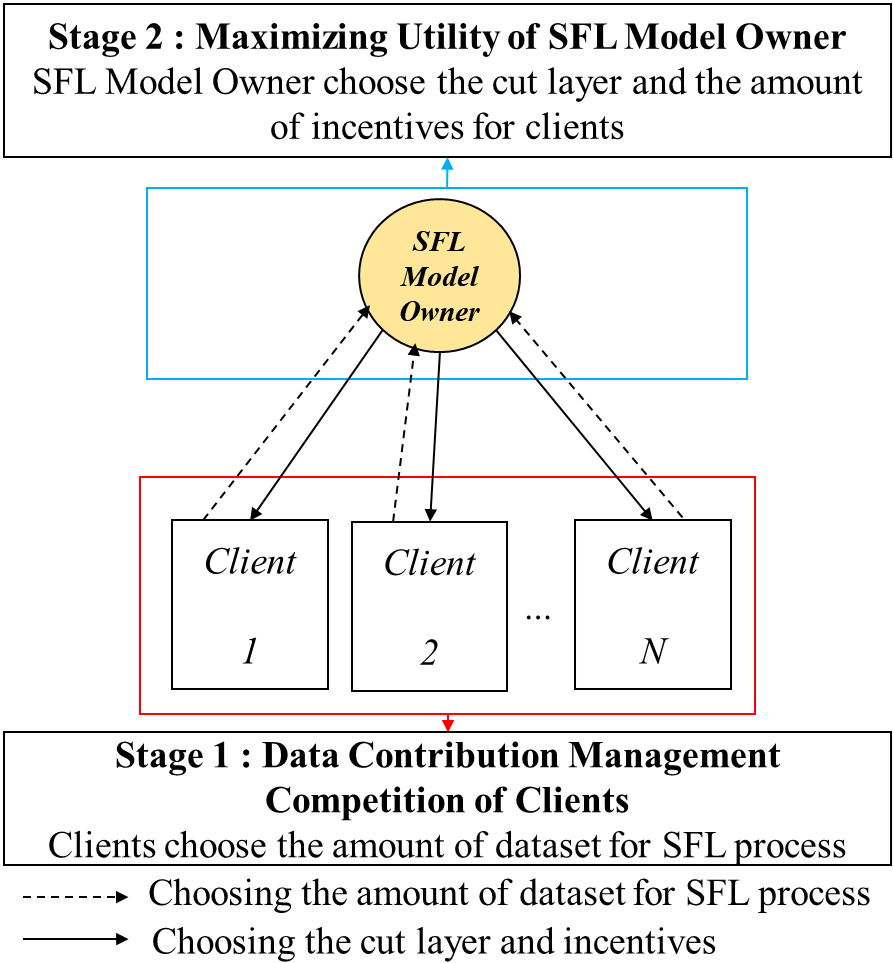}
    \caption{A diagram of a two-stage Stackelberg game in SFL where the SFL model owner is a leader and clients are followers.}
    \label{fig:Figure_diagram_game}
   \end{figure}

\section{Game-Theoretic Analysis}
In this section, we examine the competitive dynamics between the SFL model owner and the clients, while also considering the competition among clients to gain more incentives. These competitive behaviors can be considered as a hierarchical noncooperative decision problem, analyzable as a Stackelberg game. Stackelberg games are a type of noncooperative game characterized by a hierarchy among players, in which they are classified as leaders and followers \cite{Basar1999, Osborne1994, Roh2013a}. In this study, we consider a scenario where the clients follow the behavior of the SFL model owner. Treating the SFL model owner as the leader and the clients as the followers, the competition among players to maximize their utilities is modeled as a two-level Stackelberg game. If the SFL model owner decides on the cut layer and the total amount of incentives, the clients will respond by selecting their optimal strategies regarding data contribution for the SFL process. The SFL model owner is aware that the clients will choose their best response strategies based on the owner’s decisions. Therefore, the SFL model owner optimizes its strategy to maximize utility, taking into account the clients’ best responses. To find the Stackelberg Equilibrium (SE), which is the solution to this game, we employ a backward induction technique introduced in \cite{Basar1999, Osborne1994}. As illustrated in Fig. \ref{fig:Figure_diagram_game}, this technique can be summarized as follows. First, in the lower-level game for the clients, referred to as Stage 1: Data Contribution Management Competition, the clients determine the amount of data to contribute to maximize their utility function (\ref{eqn:utility_client}), given the incentives and cut layer configuration. Then, in the upper-level game for the SFL model owner, referred to as Stage 2: Maximizing Utility of SFL Model Owner, the SFL model owner combines the clients' strategies with its utility function (\ref{eqn:utility_MO}) and selects the cut layer and incentive amounts that maximize its own utility function (\ref{eqn:utility_MO}).

\subsection{Stage 1: Noncooperative Game Among Clients}
Each client aims to maximize its utility function (\ref{eqn:utility_client}) within the dataset constraints [0, $D_n$]. Given the vector $\mathbf{d} = (d_1, d_2, \dots, d_N)$, let $\mathbf{d}_{-n}$ denote the profile of strategies for the clients except $n$, i.e., $\mathbf{d}_{-n}=(d_l)_{l\in\mathcal{N}\backslash \{n\}}$. Then, by definition, we have $\mathbf{d}=(d_n, \mathbf{d}_{-n})$. 

Note that, to simplify the problem, we relax the integer variable $d_n$ into a continuous variable. Thus, the optimization problem to find the optimal amount of data contribution for the SFL process is given by
\begin{align}\label{eqn:COP} 
&\underset{d_n}{\textnormal{maximize}} \;U_n(\textbf{d}, R, L_c)\\ \nonumber
&\textnormal{ subject to }\\ \nonumber
& 0\le d_n \le D_n, \textnormal{ for } n \in \mathcal{N}. \\ \nonumber
\end{align}

\begin{definition} \label{Def:BestResponse}
The best response (BR) function $B_n(\mathbf{d}_{-n}, R, L_c)$ of client $n$ as a follower is the best strategy for client $n$ given the other clients' strategies $\mathbf{d}_{-n}$ and SFL model owner strategies $R$ and $L_c$. The BR function is denoted as follows:
\begin{eqnarray}\label{eqn:BestResponse}
B_n(\mathbf{d}_{-n}, R, L_c) =  \argmax_{d_n} U_n(d_n, \mathbf{d}_{-n}, R, L_c) \textnormal{ for } n \in \mathcal{N}.
\end{eqnarray}

\end{definition}

\begin{definition} \label{Def:NashEquilibrium}
A Nash equilibrium of the noncooperative
game among the clients is a profile of strategies $\mathbf{d}^*$ = $(d^*_{1},d^*_{2},\dots, d^*_{N})$
with the property that, given the SFL model owner strategies $R$, and $L_c$, we obtain
\begin{eqnarray}\label{eqn:BestResponse}
d^*_{n} = B_n(\mathbf{d}_{-n}, R, L_c),  \textnormal{ for } n \in \mathcal{N} .\\ \nonumber
\end{eqnarray}
\end{definition}

\begin{theorem}\label{eqn:concavityproof}
The utility function $U_n(d_n, \mathbf{d}_{-n}, R, L_c)$ of client $n$ is strictly concave in $d_n$. Thus, a Nash equilibrium exists in the noncooperative game among the clients.
\end{theorem}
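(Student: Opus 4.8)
The plan is to establish the two assertions in turn: first the strict concavity of each client's payoff in its own action $d_n$, and then, leveraging concavity together with the geometry of the strategy sets, the existence of a Nash equilibrium. The useful observation at the outset is that in $U_n=\psi_n R\frac{d_n}{\sum_{l}d_l}-d_n H_n - I_n + S$ the quantities $H_n$, $I_n$, and $S$ are independent of $d_n$ (they depend on $L_c$ and $R$ only, via \eqref{eqn:energy_tot_v2}). Hence $-d_n H_n - I_n + S$ is affine in $d_n$ and contributes nothing to the second derivative, so strict concavity reduces to analyzing the proportional-sharing reward $\frac{d_n}{d_n+\sigma_n}$, where I abbreviate $\sigma_n:=\sum_{l\neq n}d_l$.

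First I would differentiate twice with respect to $d_n$. A direct computation gives $\frac{\partial}{\partial d_n}\frac{d_n}{d_n+\sigma_n}=\frac{\sigma_n}{(d_n+\sigma_n)^2}$ and a second differentiation yields the full second derivative
\begin{equation}
\frac{\partial^2 U_n}{\partial d_n^2}=-\,\frac{2\,\psi_n R\,\sigma_n}{\left(d_n+\sigma_n\right)^3}.
\end{equation}
Since $\psi_n>0$, $R\in(0,R^{\max}]$, and the denominator is positive whenever the aggregate contribution is positive, this expression is strictly negative as long as $\sigma_n>0$, which establishes strict concavity of $U_n$ on $[0,D_n]$.

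For existence I would invoke the standard concave-game existence theorem (Debreu--Glicksberg--Fan, as available in the cited references \cite{Basar1999,Osborne1994}). The hypotheses to verify are: (i) each strategy set $[0,D_n]$ is nonempty, compact, and convex; (ii) $U_n$ is jointly continuous in $\mathbf{d}$ on the product of these sets; and (iii) $U_n$ is quasi-concave in its own variable $d_n$, which follows a fortiori from the strict concavity just shown. These conditions make the best-response correspondence $B_n$ nonempty, convex-valued, and upper hemicontinuous, so Kakutani's fixed-point theorem produces a profile $\mathbf{d}^*$ with $d_n^*=B_n(\mathbf{d}^*_{-n},R,L_c)$ for every $n$, which is precisely the Nash equilibrium of Definition~\ref{Def:NashEquilibrium}.

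The main obstacle is the degenerate boundary behaviour of the sharing term. When $\sigma_n=0$ the second derivative vanishes and the reward collapses to a constant (value $1$ for $d_n>0$), so strict concavity fails; moreover the ratio is ill-defined at the origin $\mathbf{d}=\mathbf{0}$, where continuity of $U_n$ is not automatic. I would handle this either by restricting to the economically relevant regime in which at least one rival contributes ($\sigma_n>0$), which holds once incentives are positive, or by adopting the convention that the share is $0$ at the origin and checking that best responses never place all mass there. Concretely, the first-order condition $\psi_n R\,\sigma_n/(d_n+\sigma_n)^2=H_n$ pins each contribution strictly away from the degenerate corner whenever $R>0$, so the equilibrium constructed above is unaffected by this boundary subtlety.
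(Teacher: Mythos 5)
Your proof takes essentially the same route as the paper: the identical second-derivative computation $\frac{\partial^2 U_n}{\partial d_n^2} = -\,2\psi_n R\,\sigma_n/(d_n+\sigma_n)^3 < 0$, strict concavity on the compact convex strategy set $[0,D_n]$, and then the standard concave-game existence theorem (which the paper invokes only implicitly, while you spell out the Debreu--Glicksberg--Fan/Kakutani machinery). Your attention to the degenerate case $\sigma_n = 0$, where strict concavity fails and $U_n$ is ill-defined at $\mathbf{d}=\mathbf{0}$, is a point of rigor the paper's proof silently skips, but it does not alter the argument.
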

\begin{proof} 
Taking the first and second derivatives of $U_n(d_n, \mathbf{d}_{-n}, R, L_c)$ with
respect to $d_n$, we obtain
\begin{align}\label{eqn:1stderivative}
    \frac{\partial U_n(d_n, \mathbf{d}_{-n}, R, L_c)}{\partial d_n} =  \psi_n R \frac{ \sum_{l \in \mathcal{N} \backslash \{n\}} d_{l} }{(\sum_{l = 1}^{N} d_{l})^2 } - H_n,
\end{align}

\begin{eqnarray}\label{eqn:2ndderivative}
    \frac{\partial^2 U_n(d_n, \mathbf{d}_{-n}, R, L_c)}{\partial d_{n}^2}
    =  - 2 \psi_n R  \frac{  \sum_{l \in \mathcal{N} \backslash \{n\}} d_{l} }{(\sum_{l = 1}^{N} d_{l})^3 } < 0,
\end{eqnarray}

It follows from \eqref{eqn:2ndderivative} that the second derivatives of the utility  function $U_n$ is negative. This implies that the utility function $U_n(d_n, \mathbf{d}_{-n}, R, L_c)$ is strictly concave in $d_n$ where the dataset constraints [0, $D_n$] is a compact and convex set. Thus, the proposed noncooperative client-level game has a Nash equilibrium \cite{Lee2015}.
\end{proof}
    \begin{definition} \label{Def:StandardFunction}
A function $f(p)=(f_1(p),...,f_N(p))$, where $p=(p_1,...,p_N)$, is said to be standard if the following properties are satisfied for all $p \ge 0\footnote{Here, \( p \geq 0 \) indicates that \( p \) is elementwise non-negative, i.e., \( p_i \geq 0 \) for all \( i = 1, \ldots, N \).}.$
\begin{itemize}
    \item Positivity : $f(p) > 0$. 
    \item Monotonicity : For all $p$ and $p'$, if $p \ge p'$, then $f(p) \ge f(p')$. 
    \item Scalability : For all $\mu > 1$, $\mu f(p) > f(\mu p)$. 
\end{itemize}
\end{definition}
To prove the uniqueness of the Nash equilibrium in the proposed client-level game, it is sufficient to demonstrate that the BR function of each client, as defined in (\ref{eqn:BestResponse}), is a $standard$ $function$ \cite{Lee2015}.

\begin{theorem} \label{Thm:StandardFunction}
The BR function $B_n(\mathbf{d}_{-n}, R, L_c)$ of client $n$ is a $standard$ $function$ of $\mathbf{d}_{-n}$ for sufficiently large $R$.    
\end{theorem}

\begin{proof}
By equating the first derivative of $U_n(d_n, \mathbf{d}_{-n}, R, L_c)$ in (\ref{eqn:1stderivative}) to zero, we obtain
$$
 \psi_n R \frac{ \sum_{l \in \mathcal{N} \backslash \{n\}} d_{l} }{(\sum_{l = 1}^{N} d_{l})^2 } = H_n.
$$
Then, solving the above equation, we have
\begin{eqnarray}\label{eqn:BestResponse2}
   d_n = \sqrt{\frac{ \psi_n R \sum_{l\in \mathcal{N}\backslash \{n\}} d_{l}}{H_n} } - \sum_{l\in \mathcal{N}\backslash \{n\}} d_{l}\nonumber\\
= \sqrt{\sum_{l\in \mathcal{N}\backslash \{n\}} d_{l}} \left ( \sqrt{\frac{\psi_n R}{H_n}} - \sqrt{\sum_{l\in \mathcal{N}\backslash \{n\}} d_{l}} \right ).
\end{eqnarray}

Additionally, if the expression on the right-hand side of (\ref{eqn:BestResponse2}) exceeds $D_n$, the client $n$ will participate by setting $d_n = D_n$. Consequently, the BR function $B_n(\mathbf{d}_{-n}, R, L_c)$ (also denoted as $d_n^*$) is given as follows:
\begin{eqnarray}\label{eqn:BR_sol}
     d_n^{*} =\left\{
                  \begin{array}{llll}
                   \displaystyle
                    0  &, & \hbox{\textnormal{if } $\frac{\psi_n R}{H_n} \le \sum_{l\in \mathcal{N}\backslash \{n\}} d_{l}$,} &\\
                   (\ref{eqn:BestResponse2}) &, & \hbox{\textnormal{if } $0 < d_n < D_n$,}& \\
                   D_n &, & \hbox{\textnormal{otherwise}}&
                   \end{array}
                \right.
\end{eqnarray}
    \begin{itemize}
    \item Positivity: Under the condition $\frac{\psi_n R}{H_n} > \sum_{l\in \mathcal{N}\backslash \{n\}} d_{l}$, we obtain
    $$
    B_n(\mathbf{d}_{-n}, R, L_c) > 0.
    $$
    \item Monotonicity : Under the same condition, by taking the first derivatives of $B_n(\mathbf{d}_{-n}, R, L_c)$ with respect to $d_{l}$ for $l \in \mathcal{N}\backslash \{n\}$, we obtain 
    $$ \frac{\partial B_n(\mathbf{d}_{-n}, R, L_c) }{\partial d_{l}}= \frac{1}{2}   \sqrt{\frac{\psi_n R}{H_n}} \frac{1}{\sqrt{\sum_{l\in \mathcal{N}\backslash \{n\} } d_{l}} } - 1. $$
    
    If we extend the positivity condition to $\frac{\psi_n R}{H_n} > 4\sum_{l\in \mathcal{N}\backslash \{n\}} d_{l}$, this still implies the necessity of providing sufficient $R$ for clients. The aforementioned first derivatives of $B_n(\mathbf{d}_{-n}, R, L_c)$ satisfy the positivity, indicating that it is an increasing function.
    \item Scalability : From (\ref{eqn:BestResponse2}), we have
    
    \begin{align*}
    &\mu B_n(\mathbf{d}_{-n}, R, L_c) \\
    &= \mu \left ( \sqrt{\frac{ \psi_n R \sum_{l\in \mathcal{N}\backslash \{n\}} d_{l}}{H_n} } - \sum_{l\in \mathcal{N}\backslash \{n\}} d_{l} \right )
    \end{align*}
    and 

    \begin{align*}
    & B_n(\mu\mathbf{d}_{-n}, R, L_c) \\
    & =  \sqrt{\frac{ \psi_n R \mu\sum_{l\in \mathcal{N}\backslash \{n\}} d_{l}}{H_n} } -\mu \sum_{l\in \mathcal{N}\backslash \{n\}} d_{l}.
    \end{align*}
    Because $\mu > 1$, we obtain
    \begin{align*}
    &\mu B_n(\mathbf{d}_{-n}, R, L_c) - B_n(\mu\mathbf{d}_{-n}, R, L_c) \\
    &=      \sqrt{\mu}(\sqrt{\mu}-1)\sqrt{\frac{\psi_n R}{H_n} \sum_{l \in \mathcal{N} \backslash \{n\} } d_{l}} > 0.
    \end{align*}
    \end{itemize}

Under the condition $\frac{\psi_n R}{H_n} > 4\sum_{l\in \mathcal{N}\backslash \{n\}} d_{l}$ implying that the sufficient $R$ is provided for clients, we can conclude that the BR function $B_n(\mathbf{d}_{-n}, R, L_c)$ is a $standard$ $function$ of $\mathbf{d}_{-n}$.
\end{proof}

\begin{theorem}\label{Thm:uniqueNE}
The non-cooperative game among the clients has a unique Nash equilibrium.
\end{theorem}
\begin{proof}
Referring to Theorem \ref{Thm:StandardFunction}, we can verify whether the BR function $B_n(\mathbf{d}_{-n}, R, L_c)$ conforms to the definition of a $standard$ $function$. This in turn ensures the uniqueness of the fixed point $\mathbf{d}^* = (d_1^*, d_2^*, \dots, d_N^*)$.
\end{proof}

\begin{theorem}\label{eqn:unique_NE}
For the noncooperative game among clients, the unique Nash equilibrium has a closed-form expression given by the following:
\begin{align}\label{eqn:closedform_NE}
d_{n}^*=\frac{(N-1)R}{\sum_{l\in \mathcal{N}}\frac{H_l}{\psi_l}}\left(1-\frac{H_n(N-1)}{\psi_n\sum_{l\in \mathcal{N}}\frac{H_l}{\psi_l}}\right)\ \  \forall n \in \mathcal{N},
\end{align}
where $d_{n}^*$ should satisfy the constraint of [0,$D_n$] as specified in (\ref{eqn:BR_sol}). 
\end{theorem}
\begin{proof}
From (\ref{eqn:BR_sol}), for any client $n\in \mathcal{N}$, we obtain
\begin{eqnarray}\label{eqn:sum_of_BR}
  \sum_{l\in \mathcal{N}}d_{l}^* = \sqrt{\frac{ \psi_n R \sum_{l\in \mathcal{N}\backslash \{n\}} d_{l}^*}{H_n} }.
\end{eqnarray}
By setting $\eta=\sum_{l\in \mathcal{N}}d_{l}^*$, we can derive that
\begin{eqnarray}\label{eqn:closedform_NE_v1}
  d_{n}^*=\eta-\frac{\eta^2H_n}{\psi_n R}.
\end{eqnarray}
By summing (\ref{eqn:closedform_NE_v1}) over $N$ clients, we have
\begin{eqnarray}\label{eqn:closedform_NE_v2}
  \eta=N\eta-\frac{\eta^2}{R}\sum_{l\in \mathcal{N}}\frac{H_l}{\psi_l}. 
\end{eqnarray}
By solving the problem of (\ref{eqn:closedform_NE_v2}), we obtain
\begin{eqnarray}\label{eqn:closedform_NE_v3}
 \eta=\frac{(N-1)R}{\sum_{l\in \mathcal{N}}\frac{H_l}{\psi_l}}. 
\end{eqnarray}
By substituting (\ref{eqn:closedform_NE_v3}) into (\ref{eqn:closedform_NE_v1}), we derive the closed-form expression in (\ref{eqn:closedform_NE}).

\end{proof}

\subsection{Stage 2: Utility Maximization for SFL Model Owner}
The optimization problem for the SFL model owner is formulated as
\begin{align}\label{eqn:SFL_opt} 
&\underset{R, L_c}{\textnormal{maximize}} \;U_{MO}(R, L_c, \mathbf{d}^*)\\ \nonumber
&\textnormal{ subject to }\\ \nonumber
& R_{\min}\le R \le R_{\max} \ \& \ L_{\min}\le L_c \le L_{\max}, 
\end{align}
where $\mathbf{d}^*$ is given by  (\ref{eqn:closedform_NE}). Note that, to simplify the problem, we relax the integer variables $R$ and $L_c$ into continuous variables. In this context, $R_{\min}$ and $R_{\max}$ in the constraints can be set by the SFL model owner as the minimum requirement to motivate the SFL participants and as the maximum reward for attempting to achieve a positive utility value, respectively, by balancing the three terms in $U_{MO}$, taking their weighting factors into account. Moreover, $L_{\min}$ and $L_{\max}$ can be set by the SFL model owner to ensure that privacy concerns from clients (e.g., reconstruction attacks) are not violated and to avoid imposing excessive burdens on clients during the SFL process, respectively.

\begin{theorem}\label{eqn:global_optimal_SFL}
An optimal solution for $R$ and $L_c$ exists for maximizing the utility function $U_{MO}(R, L_c, \mathbf{d}^*)$.
\end{theorem}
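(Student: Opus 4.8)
The plan is to reduce the leader's bilevel problem to an ordinary optimization over a compact set and then invoke the Weierstrass extreme value theorem. The key observation is that once the followers' reaction is fixed at the Nash equilibrium of Theorem~\ref{eqn:unique_NE}, the owner's objective becomes a plain continuous function of the pair $(R, L_c)$ on a closed bounded rectangle, so existence follows without requiring any concavity of $U_{MO}$.

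First I would substitute the closed-form equilibrium into (\ref{eqn:utility_MO}). Using the aggregate contribution $\eta = \sum_{l\in\mathcal{N}} d_l^{*}$ from (\ref{eqn:closedform_NE_v3}), together with $H_l = T f_{FLOPs}(L_c) A_l$, the aggregate becomes $\eta = \tfrac{(N-1)R}{T f_{FLOPs}(L_c)\,\Phi}$, where $\Phi := \sum_{l\in\mathcal{N}} A_l/\psi_l > 0$ is a constant independent of $R$ and $L_c$. Substituting this and $f_{FLOPs}(L_c) = aL_c + b$ into (\ref{eqn:utility_MO}) yields the reduced objective
\begin{align}
\tilde{U}_{MO}(R, L_c) &= \tau_1 \ln\!\left(1 + \frac{(N-1)R}{T(aL_c+b)\,\Phi\, d_{req}}\right) \nonumber \\
&\quad + \tau_2 \frac{aL_c+b}{w_{FLOPs}} - R, \nonumber
\end{align}
so that (\ref{eqn:SFL_opt}) is equivalent to maximizing $\tilde{U}_{MO}$ over the rectangle $[R_{\min}, R_{\max}] \times [L_{\min}, L_{\max}]$.

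Next I would verify continuity of $\tilde{U}_{MO}$ on this rectangle, which is the crux of the argument. Since $f_{FLOPs}(L_c) = aL_c+b$ is a strictly positive, increasing function (it models a computation workload) and $T, \Phi, d_{req} > 0$, the denominator $T(aL_c+b)\Phi d_{req}$ is bounded below on the box by $T f_{FLOPs}(L_{\min})\Phi d_{req} > 0$; hence the logarithm's argument stays $\ge 1$, the log is well-defined, and the first term is continuous. The remaining two terms are manifestly continuous, so $\tilde{U}_{MO}$ is continuous on the feasible set. I note that if the closed form of Theorem~\ref{eqn:unique_NE} violates some $[0, D_n]$ and is clamped according to (\ref{eqn:BR_sol}), the equilibrium map $(R, L_c) \mapsto \mathbf{d}^{*}$ nonetheless remains continuous because projection onto $[0, D_n]$ is continuous, so continuity of the composite is preserved in all cases.

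Finally, the feasible set $[R_{\min}, R_{\max}] \times [L_{\min}, L_{\max}]$ is closed and bounded, hence compact by the Heine--Borel theorem. A continuous function on a nonempty compact set attains its maximum (Weierstrass), so an optimal pair $(R^{*}, L_c^{*})$ exists, which proves the claim. The main obstacle is exactly the continuity verification above: one must confirm that substituting the equilibrium reaction introduces no singularity, i.e., that the denominator inside the logarithm never vanishes on the feasible box and that the (possibly clamped) equilibrium mapping stays continuous. All remaining steps are routine once the problem is reduced to an application of Weierstrass, and crucially this route avoids having to establish joint concavity of $\tilde{U}_{MO}$, which need not hold.
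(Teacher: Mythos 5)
Your proof is correct, but it takes a genuinely different route from the paper's. The paper substitutes the interior closed form $d_n^* = X_n(L_c)R$ into $U_{MO}$, computes $\partial^2 U_{MO}/\partial R^2 < 0$ to establish strict concavity in $R$ for each fixed $L_c$, and then---because concavity in $L_c$ is analytically intractable---argues existence essentially constructively: an exhaustive search over $L_c \in [L_{\min}, L_{\max}]$ with a convex (CVX) solve for $R$ at each candidate cut layer (Algorithm~\ref{algo-Ex}). Your argument instead reduces the bilevel problem to the explicit function $\tilde{U}_{MO}(R,L_c)$ via the aggregate $\eta$ from (\ref{eqn:closedform_NE_v3}) and invokes the Weierstrass extreme value theorem on the compact rectangle $[R_{\min},R_{\max}]\times[L_{\min},L_{\max}]$. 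This is more elementary and more general---it dispenses with concavity entirely, which is welcome since joint concavity of $\tilde{U}_{MO}$ indeed need not hold---but it is purely existential: it certifies a maximizer without indicating how to compute one, whereas the paper's concavity-in-$R$ observation is exactly what licenses the efficient one-dimensional convex solve inside its search loop. Your positivity check on the logarithm's denominator is also the right thing to verify, and it is genuinely needed: in the paper's fitted model $f_{FLOPs}(L_c)=0.3779\,L_c-0.212$ the intercept is negative, so the correct hypothesis is precisely the one you state, namely $f_{FLOPs}(L_{\min})>0$ on the feasible box.

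One caveat on your side remark about clamping: the claim that the equilibrium map remains continuous ``because projection onto $[0,D_n]$ is continuous'' is not a valid argument as stated. The constrained Nash equilibrium is a fixed point of the clamped best responses (\ref{eqn:BR_sol}), not the coordinatewise projection of the unconstrained closed form of Theorem~\ref{eqn:unique_NE}; when one client saturates at $D_n$, the remaining clients' equilibrium contributions change, so continuity of the projection operator alone does not yield continuity of $(R,L_c)\mapsto \mathbf{d}^*$. Since the paper's own proof (and your main line) work with the interior form $d_n^* = X_n(L_c)R$, this does not damage the theorem as the paper poses it, but if the clamped regime were actually in play you would need a real argument---e.g., uniqueness of the constrained equilibrium combined with an upper-hemicontinuity (maximum theorem) argument---rather than the projection remark.
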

\begin{proof} 
Since $d_n^*$ in (\ref{eqn:closedform_NE}) is a linear function of $R$, we can redefine it as
\begin{eqnarray}\label{eqn:simple_form_of_d*}
   d_n^* = X_n(L_c)R,
\end{eqnarray}
where $X_n$ is a function of $L_c$ that acts as the coefficient in the linear function specifying $d_n^*$. Then, by substituting (\ref{eqn:simple_form_of_d*}) into $U_{MO}$, then $U_{MO}$ is given by 
 \begin{align*} 
    &U_{MO}(R, L_c, \mathbf{d}^*) \\
    &= \tau_1 \ln(1+R\frac{\sum_{n = 1}^N X_n(L_c)}{d_{req}}) +\tau_2 \frac{f_{FLOPs}(L_c)}{w_{FLOPs}} - R.
    \end{align*}
Finally, for any feasible $R$ and $L_c$, taking the second derivatives of $U_{MO}$ with respect to $R$ and $L_c$, we have
\begin{eqnarray} 
\frac{\partial^2 U_{MO}}{\partial R^2}= -\frac{\tau_1 (\sum_{n=1}^N X_n(L_c))^2}{d_{req}^2(1 + R\frac{ \sum_{n=1}^N X_n(L_c)}{d_{req}})^2}<0.
\end{eqnarray}
Accordingly, $U_{MO}$ is strictly concave with respect to $R$ with given $X_n(L_c)$. Then, the optimal $R^*$ can be determined using a CVX solver. On the other hand, demonstrating the concavity of the function $X_n(L_c)$ with respect to $L_c$ analytically is challenging. Therefore, within the constraints from $L_{\min}$ to $L_{\max}$, we can find the optimal solution of $R^*$ and $L_c^*$ using the exhaustive search-based heuristic algorithm with the CVX solver described in Algorithm \ref{algo-Ex}. This approach is reasonable because the search space of $L_c$ is relatively small, making it practically deployable.

\begin{algorithm}[t]
\caption{Find optimal $L_c$ and $R$}
\label{algo-Ex}
\hspace*{\algorithmicindent} \textbf{Input :} $\mathbf{d}^*$ \\
\hspace*{\algorithmicindent} \textbf{Output :} Optimal $L_c$, $R$
\begin{algorithmic}[1]
    \State \textbf{Initialize} \textbf{M}$[1, \dots, L_{\max}]$ with $-Inf$
    \For {$L_c$ from $L_{\min}$ to $L_{\max}$}
		\State $R \gets $ solve (\ref{eqn:SFL_opt}) by using a CVX solver
            \State \textbf{M}$[L_c]$ $\gets U_{MO}(R, L_c, \mathbf{d}^*)$
	\EndFor	
 
	\State $(R^*, L_c^*) \gets$ argmax($\textbf{M}$)
	\State Return $(R^*, L_c^*)$
\end{algorithmic}
\end{algorithm}

\end{proof}

\begin{theorem}
A Stackelberg equilibrium exists in the proposed two-stage Stackelberg game in SFL.
\begin{proof}
Because the noncooperative game among the clients has a unique Nash equilibrium $\mathbf{d}^* = (d_1^*, d_2^*, \dots, d_N^*)$, as shown in Theorem \ref{eqn:unique_NE}, and the SFL model owner can always find its optimal strategy $R^*$ and $L_c^*$, as shown in Theorem \ref{eqn:global_optimal_SFL}, we conclude that a Stackelberg equilibrium exists in the proposed two-stage Stackelberg game in SFL.
\end{proof}
\end{theorem}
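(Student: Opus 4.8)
The plan is to prove existence by backward induction, assembling the two results already established for the two stages of the game. By definition, a Stackelberg equilibrium is a tuple $(R^*, L_c^*, \mathbf{d}^*)$ in which the followers play a Nash equilibrium of the Stage-1 game given the leader's announced action $(R^*, L_c^*)$, while the leader's action maximizes $U_{MO}$ over its feasible set once the anticipated follower response is substituted in. It therefore suffices to exhibit one such tuple, and Theorems \ref{eqn:unique_NE} and \ref{eqn:global_optimal_SFL} supply exactly its ingredients.

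First I would invoke Theorem \ref{eqn:unique_NE} to define the followers' response map. For every leader action $(R, L_c)$ in the compact feasible rectangle $[R_{\min}, R_{\max}] \times [L_{\min}, L_{\max}]$ from (\ref{eqn:SFL_opt}), the Stage-1 game admits a \emph{unique} Nash equilibrium $\mathbf{d}^*(R, L_c)$, given in closed form by (\ref{eqn:closedform_NE_v4}) and clipped to the box $[0, D_n]$ via (\ref{eqn:BR_sol}). Uniqueness is the feature I would emphasize: it guarantees that the leader's anticipation of follower behavior is unambiguous, so the induced mapping $(R, L_c) \mapsto \mathbf{d}^*(R, L_c)$ is single-valued and the reduced leader problem is well posed. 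Next I would substitute this response map into the leader's utility and appeal to Theorem \ref{eqn:global_optimal_SFL}, which establishes that $U_{MO}(R, L_c, \mathbf{d}^*(R, L_c))$ attains a maximizer $(R^*, L_c^*)$: the strict concavity in $R$ from (\ref{eqn:U_{MO}_v2}) fixes $R^*$ for each admissible $L_c$ via the CVX solve, and the exhaustive search over the finitely many cut-layer values selects $L_c^*$. Setting $\mathbf{d}^* := \mathbf{d}^*(R^*, L_c^*)$, the tuple $(R^*, L_c^*, \mathbf{d}^*)$ satisfies both equilibrium conditions by construction, which completes the argument.

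The part requiring the most care is not a deep estimate but the conceptual verification that the two conditions genuinely hold \emph{simultaneously} at the constructed point, and in particular that the leader's reduced objective truly attains its supremum rather than merely approaching it. This rests on two facts I would spell out: the response $\mathbf{d}^*(R, L_c)$ depends continuously on the leader action, being linear in $R$ through the coefficient $X_n(L_c)$ of (\ref{eqn:simple_form_of_d*}) and clipped by the continuous projection in (\ref{eqn:BR_sol}), so the composed objective is continuous on a compact set and a maximizer exists by the extreme value theorem. The only genuine subtlety, and the reason I expect it to be the main friction point, is the non-smoothness introduced by the three-regime best response in (\ref{eqn:BR_sol}): the clipping can create kinks in the leader's objective, so one cannot rely on global first-order stationarity conditions. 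This is precisely why Theorem \ref{eqn:global_optimal_SFL} resorts to the exhaustive-search-plus-CVX procedure rather than a closed-form argument, and I would lean on that theorem as a black box here rather than re-deriving attainment of the leader optimum.
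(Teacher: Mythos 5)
Your proposal is correct and follows essentially the same backward-induction argument as the paper: invoke Theorem \ref{eqn:unique_NE} for a unique, well-defined follower response and Theorem \ref{eqn:global_optimal_SFL} for the leader's optimizer, then assemble the tuple $(R^*, L_c^*, \mathbf{d}^*)$. Your added remarks on continuity, compactness, and the kinks from the clipped best response in (\ref{eqn:BR_sol}) are sound elaborations that the paper's two-line proof leaves implicit, but they do not change the route.
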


\section{Performance Evaluation}

\begin{table}[]
\caption{Parameter Settings} 
\centering 
\begin{tabular}{c c}
\hline\hline 
Parameters & Values
\\ [0.5ex] 
\hline 
$N$ & 5\\ 
$M$ & 100\\ 
$d_{req}$ & 2,000 \\
$m$ & 32 \\
$s_n, g_n$ & 24.5 Mbits\\
$r_{s:n}, r_{cm:n}$ & 100 Mbit/s\\
$r_{g:n}, r_{gcm:n}$ & 200 Mbit/s\\
$w_{FLOPs}$ & 4.3 GFLOPs\\
$\kappa$ & 16 FLOPs/cycle\\
$p_n^c$ & 4 W\\
$p_n^t, p_n^r$ & 0.2 W\\
\hline 
\end{tabular}
\label{table:parameter} 
\end{table}

\begin{table}[]    
    \caption{Regression-Based Models}
    \centering
    \begin{tabular}{l|c|c}
    \hline\hline  
      & Proposed Models & RMSE 
    \\ [0.5ex]
    \hline 
    $GFLOPs$  & $f_{\text{FLOPs}}(L_c) = 0.3779 \cdot L_c - 0.212$ & $0.0004 $ \\
    $MParams$ & $|w_c| = 0.1098 \cdot \exp(0.4711 \cdot L_c)$ & $0.0909 $ \\    
    \hline 
    \end{tabular}
    \label{table:regression} 
\end{table}

\begin{figure}       
\centering	\includegraphics[width=220pt,keepaspectratio]{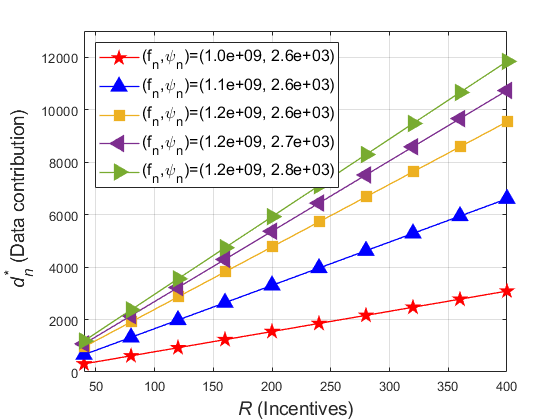}
\caption{Data contribution at NE for various ($f_n$,  $\psi_n$ ) pairs.}
\label{fig:graph_1}
\end{figure}

\begin{figure}[]
    \centering
    \begin{subfigure}{0.47\textwidth}
		\centering
		\includegraphics[width=\textwidth]{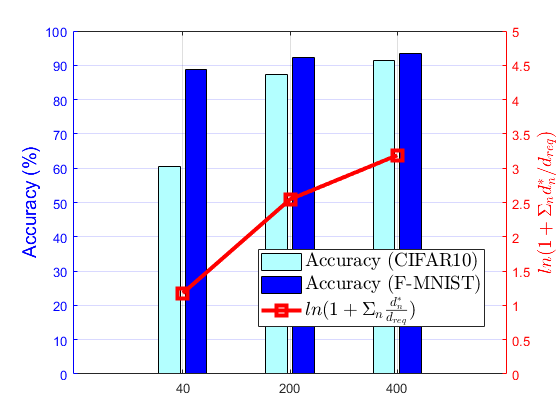}
		\caption{F-MNIST/CIFAR10}
		\label{fig:graph_2-1}
	\end{subfigure}
    \centering
    \begin{subfigure}{0.47\textwidth}
		\centering
		\includegraphics[width=\textwidth]{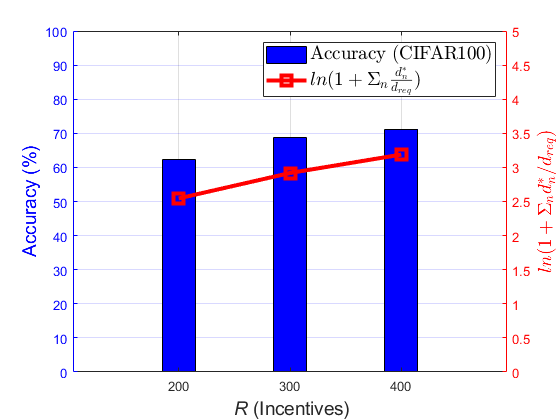}
		\caption{CIFAR100}
		\label{fig:graph_2-2}
	\end{subfigure}
    \caption{Model accuracy and satisfaction at SE.}
    \label{fig:graph_2}
\end{figure}

This section presents experimental results from our game-theoretic analysis of SFL, involving the SFL model owner and the clients. The detailed simulation parameters and regression-based models are specified in Tabs. \ref{table:parameter} and \ref{table:regression}. 

We begin by examining the optimal data contribution $d_n^*$ from the $n$-th client at the Nash Equilibrium (NE). For given incentive values $R$, the client data contributions are derived from (\ref{eqn:BR_sol}) and (\ref{eqn:closedform_NE}), as stated in \textbf{Theorem} \ref{eqn:unique_NE}.
For this numerical analysis, we assume that the cut layer is set at $L_c=3$, and five heterogeneous clients have CPU capabilities $f_n = [1.0, 1.1, 1.2, 1.2, 1.2]$ GHz, with corresponding incentive weighting factors $\psi_n = [2.6, 2.6, 2.6, 2.7, 2.8]\times 10^3$, respectively. Fig. \ref{fig:graph_1} shows the results for $R$ in the range $[40, 400]$. As expected, incentive $R$ increases the data contribution $d_n^*$ by encouraging clients to participate more actively in SFL process.
Interestingly, $d_n^*$ diminishes as the CPU capability $f_n$ of participating clients increases, suggesting a strategic balance between data contribution and energy consumption, as modeled in (\ref{eqn:utility_client}). This is because clients with higher CPU capabilities tend to reduce execution times more, balancing the trade-off between earning incentives and minimizing energy consumption. Notably, among clients with identical CPU capabilities, those with higher incentive weighting factors $\psi_n$ tend to contribute more data.

\begin{figure*}[t]
    \centering
    \begin{subfigure}{0.32\textwidth}
		\centering
		\includegraphics[width=\textwidth]{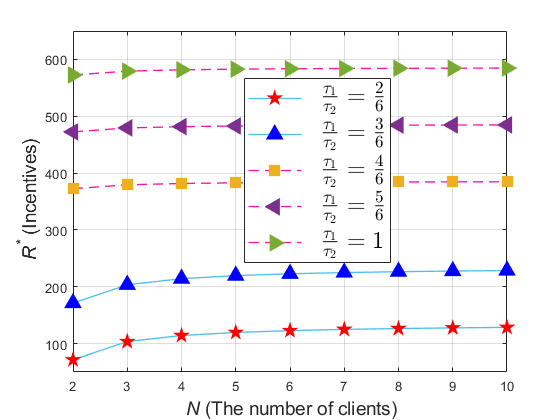}
		\caption{Incentives $R$}
		\label{fig:graph_3}
	\end{subfigure}
    \centering
    \begin{subfigure}{0.32\textwidth}
		\centering
		\includegraphics[width=\textwidth]{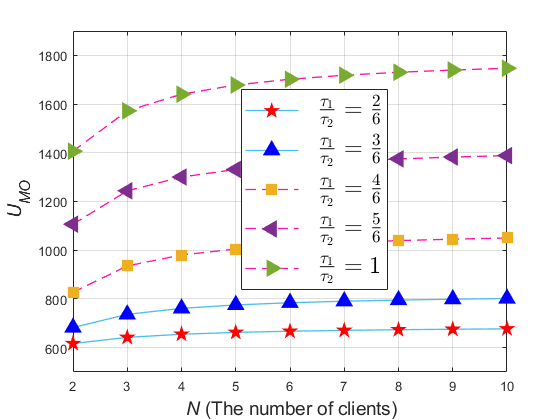}
		\caption{Utility function $U_{MO}$}
		\label{fig:graph_4-1}
	\end{subfigure}
    \begin{subfigure}{0.32\textwidth}
		\centering
		\includegraphics[width=\textwidth]{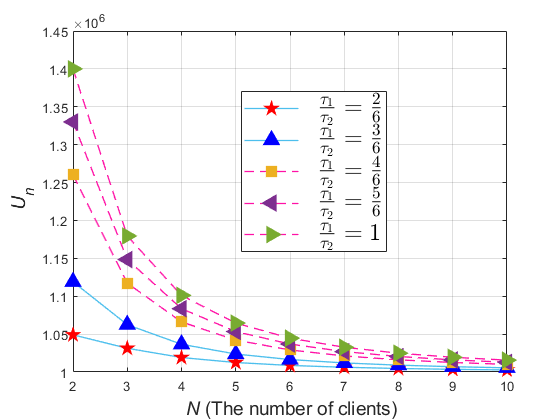}
		\caption{Utility functions $U_n$}
		\label{fig:graph_4-2}
	\end{subfigure}
    \caption{(a) Incentives at NE and (b)-(c) Utility functions at SE for various $\frac{\tau_1}{\tau_2}$ ratios. (For the dotted line, the cut layer $L_c^*$ was selected as $L_{\min}$, whereas for the solid line, it was chosen as $L_{\max}$.)}
    \label{fig:graph_4}
\end{figure*}

To keep the experiment manageable with classification accuracy based on data contributions, we further assume homogeneity among the five clients, with each having $f_n = 1.2$ GHz and $\psi_n = 2.8 \times 10^3$. For this analysis, we conducted classification experiments under Independent and Identically Distributed (IID) data settings using two datasets, the CIFAR-10 dataset \cite{krizhevsky2009learning} and Fashion-MNIST \cite{xiao2017/online} to evaluate SFL accuracy according to the data contribution. The CIFAR-10 dataset comprises color images with 10 object categories, with 50,000 and 10,000 samples in training and test splits. 
The Fashion-MNIST dataset comprises grayscale images categorized into 10 distinct fashion classes, with 60,000 samples for the training split and 10,000 samples for the test set. To ensure consistency in the number of training samples between Fashion-MNIST and CIFAR-10 datasets, we utilized 50,000 samples for the training set from the Fashion-MNIST training split. This adjustment facilitates a direct comparison of accuracy trends between the two datasets, enabling a consistent evaluation of classification accuracy in relation to data contributions.
Data from the training split were allocated among five clients proportionate to their data contributions $d_n^*$, which were determined by incentive values $R$. Specifically, in scenarios with incentives of 40, 200, and 400, each client contributes 900, 4700, and 9300 samples, respectively, totaling 4,500, 23500, and 46,500 samples were used for SFL training in each scenario. The model evaluation employed all 10,000 samples from the test split. The SFL model $w$ is structured with 12 convolutional blocks (Conv-BN-ReLU), incorporating Max Pooling every third block, followed by two fully connected layers, a dropout layer, and a Softmax output layer. The model split between client and server was determined at cut layers after the convolution blocks, offering 12 viable points for split. Input images were normalized using predefined means and standard deviations (mean=[0.4914, 0.4822, 0.4465], std=[0.2023, 0.1994, 0.2010]), and data augmentation techniques such as random cropping and horizontal flipping were applied. An Adam optimizer facilitated the training, set to a learning rate of 0.0001 and beta values of (0.5, 0.999). Model training hyper-parameters are detailed in Tab \ref{table:parameter}. The SFL outcomes, displayed in Fig. \ref{fig:graph_2-1}, align with those observed in Fig. \ref{fig:graph_1}, affirming the critical role of raw data volume in model quality and accuracy, as discussed in \cite{Niyato_2016}.
To enhance the generalizability of these findings, we extended our experiments to the CIFAR-100 dataset \cite{krizhevsky2009learning}, which is similar to CIFAR-10 but consists of 100 object categories. The dataset is divided into 50,000 training samples and 10,000 test samples. Since datasets with larger numbers of categories generally require more training data to achieve stable model training, we set the incentives $R$  to 200, 300, and 400. This resulted in each client contributing 4,700, 7,000, and 9,300 samples, respectively. Consequently, the SFL model in the CIFAR-100 scenario was trained using a total of 23,500, 35,000, and 45,000 samples. Figure \ref{fig:graph_2-2} illustrates that the conclusions drawn for CIFAR-10 remain consistent in the CIFAR-100 setting. The results obtained from the CIFAR-100 dataset also confirm that the observed consistency in classification accuracy is independent of the dataset type.
It is recommended for the SFL model owner to employ a Stackelberg strategy, where incentives are distributed proportionately to the clients' CPU capabilities, thus accelerating SFL and enhancing model accuracy.

Continuing, we examine scenarios involving an increased number of homogeneous clients, each characterized by CPU capabilities of $f_n = 1.2 $ GHz and incentive weighting factors $\psi_n = 2.8 \times 10^3$. The optimization constraints in (\ref{eqn:SFL_opt}) $L_{\min}$, $L_{\max}$, $R_{\min}$, and $R_{\max}$ are set to 3, 12, 60, and 1,000, respectively. Fig. \ref{fig:graph_4} shows the numerical outcomes for various settings of $\frac{\tau_1}{\tau_2}$ ratio.
Fig. \ref{fig:graph_3} indicates that increasing the number of clients necessitates higher incentives, as the incentive $R$ is proportionally distributed among the clients according to their data contributions, as described in (\ref{eqn:utility_client}). Additionally, an elevated $\tau_2$ compared to $\tau_1$ (i.e., lower ratio $\tau_1$ to $\tau_2$) reflects the SFL model owner's strategy to minimize the server's computational burden, resulting in the maximal cut layer value $L_c=12$ and a correspondingly lower $R^*$.
In contrast, an increased $\tau_1$, enhancing data contribution, leads to an increase in the optimal $R^*$ and maintaining the minimal cut layer $L_c=3$ to encourage active client participation. These observations are further corroborated by additional analysis of the utility functions of the SFL model owner $U_{MO}$ and clients $U_n$.
As shown in Fig. \ref{fig:graph_4-1}, the utility trends of $U_{MO}$ for the SFL model owner are consistent with Fig. \ref{fig:graph_3}, whereas the pattern in Fig. \ref{fig:graph_4-2} contrasts with these\footnote{When plotting this graph, we set the baseline constant term $S$ to $10^6$ to ensure $U_n$ is greater than zero.}. It becomes evident that as more clients engage in SFL, their utility $U_n$ diminishes. Specifically, a higher ratio of $\tau_1$ to $\tau_2$ distinctly leads to reduced utility for the number of clients, confirming the inverse relationship highlighted in Fig. \ref{fig:graph_4-2}.

\begin{figure}       
\centering	\includegraphics[width=220pt,keepaspectratio]{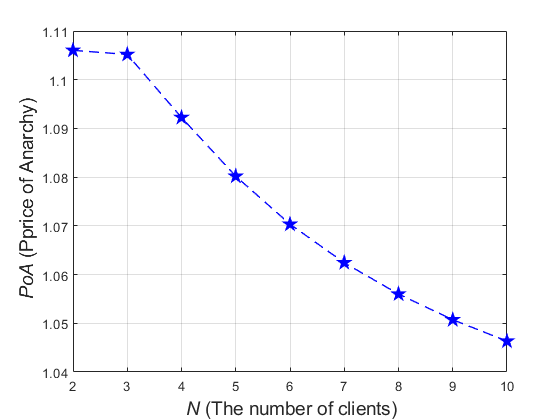}
\caption{PoA vs. the number of clients ($\psi_n$=2800)}
\label{fig:graph_5}
\end{figure}

\begin{figure}       
\centering	\includegraphics[width=220pt,keepaspectratio]{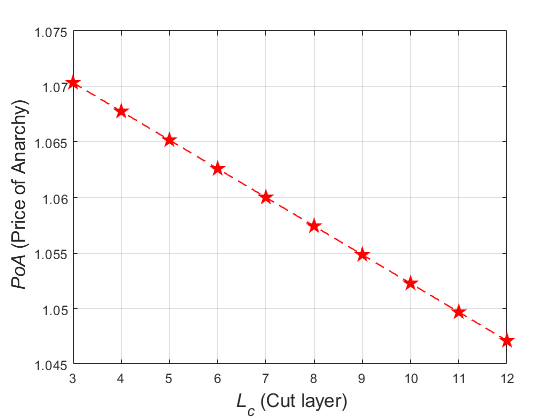}
\caption{PoA vs. varying $L_c$ $(N=6)$}
\label{fig:graph_6}
\end{figure}

\begin{figure}       
\centering	\includegraphics[width=220pt,keepaspectratio]{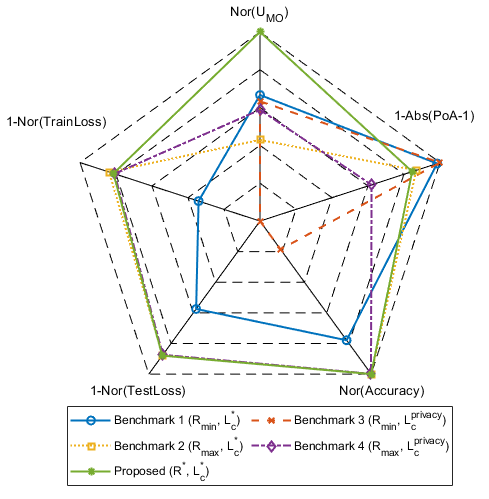}
\caption{Comparison of normalized utility, PoA, and performance metrics. Normalization is defined as $\text{Nor}(a)=a/max(a)$ and $\text{Abs}(a)=|a|$ represents the absolute value. The larger the pentagon, the better.}
\label{fig:graph_7}
\end{figure}

Finally, we evaluated the efficiency of a noncooperative game among clients, which is degraded due to their selfish behavior in the absence of central coordination. This efficiency is measured by the price of anarchy (PoA) [1, $\infty$), defined as the ratio between the optimal centralized solution and the worst equilibrium in games with multiple equilibria. In our case, since our problem has a unique NE, the PoA is given by
\begin{align} \label{eqn:PoA}
PoA&=\frac{\underset{d_n}{\textnormal{maximize}}\sum_{n=1}^N U_n(\textbf{d}, R, L_c)}{\sum_{n=1}^N U_n(\textbf{d}^*, R, L_c)}.
\end{align}
As illustrated in Fig. \ref{fig:graph_5}, the PoA decreases as the number of clients increases\footnote{We set the minimum required $d_n$ as $d_{req}/N$ to ensure that the aggregated data contribution meets the minimum necessary amount for acceptable accuracy in SFL.}. This trend confirms that the NE of the noncooperative game becomes more effective with a larger number of clients. This occurs because, as the number of clients increases, the optimal data contribution $\mathbf{d}^*$ at NE decreases. The elevated competition among a larger number of clients reduces expected individual incentives, thereby narrowing the gap between the optimal centralized solution and the NE. Additionally, as shown in Fig. \ref{fig:graph_6}, the PoA also decreases with an increase in $L_c$. This occurs because a higher $L_c$ places a greater burden on clients, leading to a reduction of $\mathbf{d}^*$ at NE. Consequently, $L_c$ can be an effective regulator for controlling the effectiveness of the NE by adjusting the level of competition.

We also validated a balance between utility and classification accuracy by comparing the proposed method with other approaches, as illustrated in Fig. \ref{fig:graph_7}. Under the conditions $N=5$ and $\frac{\tau_1}{\tau_2}=\frac{4}{6}$, the proposed method determined the optimal incentive and cut layer to be $R^*=382.85$ and $L_c^*=3$, respectively. This configuration resulted in each client contributing 9,000 training samples, with a total of 45,000 samples across all five clients in the CIFAR-10 dataset.
The impact of the incentive $R$ was analyzed through two benchmarks: Benchmark 1, defined as $(R_{\min}, L_c^*)$, and Benchmark 2, defined as $(R_{\max}, L_c^*)$, where $L_c^*$ is the same cut layer obtained by the proposed method. With $R_{\min}=60$, each client contributed $d_n^* = 1,400$ data samples, while $R_{\max}$ yielded $d_n^* = 23,330$ samples per client. However, due to the CIFAR-10 dataset's constraints, the maximum number of samples per client was limited to 10,000.
The effect of the cut layer was further evaluated using Benchmark 3 and Benchmark 4, defined as $(R_{\min}, L_c^{privacy})$ and $(R_{\max}, L_c^{privacy})$, respectively. In these cases, $L_c^{privacy}=8$ was selected to mitigate privacy leakage, as choosing a higher cut layer helps reduce privacy leakage, a concept that will be discussed later.

The proposed method and all benchmarks 1 to 4 were compared across five measures: utility of the SFL model owner ($U_{MO}$), price of anarchy (PoA), Accuracy, TestLoss, and TrainLoss. Each measure was normalized using $\text{Nor}(a) = \frac{a}{\max(a)}$, with $\text{Abs}(a) = |a|$ representing the absolute value. Fig. \ref{fig:graph_7} presents the normalized comparison of utility, PoA, accuracy, and loss values, where a larger pentagon plot signifies better overall performance.
As demonstrated by the previous experiments, larger training datasets lead to improve accuracy, indicating $R_{\max}$ maximizes accuracy (Benchmarks 2 and 4). However, when $R_{\max}$ is applied, this also results in a side effect of reduced $U_{MO}$. Furthermore, $R_{\max}$ also significantly increases data requirements and training complexity.
An increased $L_c$ causes clients to contribute fewer data samples, which in turn degrades classification accuracy. (Benchmark 3)
In contrast, the proposed method successfully balanced $R^*$ and $L_c^*$, optimizing both utility and accuracy. The optimal $R^*=382.85$ determined by the proposed method requires only $d_n^* = 9,000$ samples per client, achieving comparable performance to $R_{\max}$ case while offering improved utility ($U_{MO}$) for the SFL model owner.

Finally, efficiency was assessed using the PoA, where values closer to 1 indicate greater efficiency. As shown in Fig. \ref{fig:graph_7}, Benchmarks 1 and 3 achieve the highest PoA, but this comes with the sacrifice of reduced classification accuracy due to $R_{\min}$. The proposed method achieved a PoA closer to 1 than Benchmark 2 and Benchmark 4 ($R_{\max}$). 
We can conclude that the proposed method effectively balances $R$ and $L_c$, achieving both optimal utility and accuracy.

\section{Discussion}

\subsection{Impact of Cut Layer Selection on Privacy}
In SFL, the main server can still potentially reconstruct the original input data from the smashed data (specifically, the embedded features) received from clients, a vulnerability known as a reconstruction attack. This risk is particularly relevant if the main server is honest but curious (HBC). From the study \cite{Lee2024a}, we found that the selection of the cut layer significantly impacts the privacy level. For example, as the cut layer $L_c$ increases, indicating greater model complexity, the quality of the reconstructed image decreases. This degradation in image quality corresponds to enhanced privacy because the increased complexity of the client-side model introduces advanced nonlinearities into the output. Therefore, to address client concerns regarding privacy, the SFL model owner should carefully select $L_{\min}$ to ensure an acceptable level of privacy.

\subsection{Interplay between Cut Layer Selection and Differential Privacy}
Clients can enhance their privacy by proactively adding noise to the smashed data, a technique known as differential privacy (DP). Here, the amount of noise added is inversely proportional to the privacy leakage parameter $\epsilon$. Formally, let $D$ represent a set of data points, and $\mathcal{M}$ be a probabilistic function or mechanism acting on databases. The mechanism $\mathcal{M}$ is considered $(\epsilon, \delta)$-DP if, for all subsets of possible outputs $\mathcal{S} \subseteq \text{Range}(\mathcal{M})$, and for all pairs of databases $D$ and $D'$ that differ by a single element, the following condition holds:
\begin{align}
    \operatorname{Pr}(\mathcal{M}(D) \in \mathcal{S}) \leq e^{\epsilon}  \operatorname{Pr}(\mathcal{M}(D') \in \mathcal{S}) + \delta,
\end{align}
where the smaller values of $\epsilon$ and $\delta$ indicate higher levels of privacy and vice versa. The essence of this definition is that when a client's data is added to or removed from the database, the distribution of outcomes from a private mechanism should remain largely unchanged \cite{Kalra2023, dwork2014algorithmic}. Under these circumstances, the SFL model owner cannot infer specific details about a client's data by observing the mechanism's output, thereby preserving privacy. From the study \cite{Lee2024a}, introducing Gaussian noise to the smashed data degraded the quality of the reconstructed image, thereby enhancing privacy. If the SFL model owner permits clients to utilize DP during SFL training, the owner can potentially set a larger $L_{\min}$ due to the reduced privacy concerns. In this context, we additionally perform the empirical study, shown in Table \ref{table:ssim}. To examine the relationship between noise levels and privacy leakage, we utilized a training-based adversarial inversion approach. The CIFAR-10 dataset was used for training an SFL model, a shadow model, and a reconstruction model. The privacy risk was evaluated by measuring the perceptual similarity between original and reconstructed images using the Structural Similarity Index (SSIM). From this result, clients must carefully determine the noise level in DP, balancing the trade-off between privacy and accuracy. Moreover, when the client complexity is low ($L_c$=3 in this case), the accuracy of SFL is more sensitive to the added noise. 
\begin{table}[]    
    \caption{Accuracy and SSIM with varying noise level ($\sigma$)}
    \centering
    \begin{tabular}{c|c|c|c|c}
    \hline\hline  
     & Cut layer & $\sigma=0$ & $\sigma=1$ & $\sigma=2$  
     \\ [0.5ex]
     \hline  
     \multirow{2}{*}{Accuracy} & $L_c=3$  & 90.8 & 90.1 & 88.0  \\
                               & $L_c=12$ & 92.1 & 91.8 & 92.0  \\
     \hline  
     \multirow{2}{*}{Privacy Leakage (SSIM)} & $L_c=3$  & 0.9563 & 0.7692 & 0.5194  \\
                           & $L_c=12$ & 0.2721 & 0.1606 & 0.0921  \\
    \hline 
    \end{tabular}
    \label{table:ssim} 
\end{table}

\section{Conclusion and Future Work}
This paper has studied the interaction between an SFL model owner and clients during the SFL process, formulated as a single-leader multi-follower Stackelberg game in competitive or conflict situations. Each client's strategy involves the amount of data contributed for local training, while the SFL model owner determines suitable incentives to motivate client participation in SFL. The SFL model owner also sets the cut layer to balance the training burden between clients and servers while trying to meet the required privacy level. Our results indicate that Stackelberg strategies are desirable operating points for all participants in competitive scenarios. For a more practical approach, future work may involve designing a more sophisticated incentive management game that considers unshared information between clients, detailed evaluations of client contributions, and computing and networking dynamics (such as a Stackelberg Bayesian game-based or deep reinforcement learning-based incentive management).

\bibliographystyle{IEEEtran}

\begin{thebibliography}{10}
\providecommand{\url}[1]{#1}
\csname url@samestyle\endcsname
\providecommand{\newblock}{\relax}
\providecommand{\bibinfo}[2]{#2}
\providecommand{\BIBentrySTDinterwordspacing}{\spaceskip=0pt\relax}
\providecommand{\BIBentryALTinterwordstretchfactor}{4}
\providecommand{\BIBentryALTinterwordspacing}{\spaceskip=\fontdimen2\font plus
\BIBentryALTinterwordstretchfactor\fontdimen3\font minus \fontdimen4\font\relax}
\providecommand{\BIBforeignlanguage}[2]{{%
\expandafter\ifx\csname l@#1\endcsname\relax
\typeout{** WARNING: IEEEtran.bst: No hyphenation pattern has been}%
\typeout{** loaded for the language `#1'. Using the pattern for}%
\typeout{** the default language instead.}%
\else
\language=\csname l@#1\endcsname
\fi
#2}}
\providecommand{\BIBdecl}{\relax}
\BIBdecl

\bibitem{Lee2024}
J.~Lee, F.~Solat, T.~Y. Kim, and H.~V. Poor, ``{Federated Learning-Empowered Mobile Network Management for 5G and Beyond Networks: From Access to Core},'' \emph{IEEE Communications Surveys \& Tutorials}, vol.~26, no.~3, pp. 2176--2212, Jan. 2024.

\bibitem{Thapa2022}
C.~Thapa, P.~C. Mahawaga~Arachchige, S.~Camtepe, and L.~Sun, ``{SplitFed: When Federated Learning Meets Split Learning},'' in \emph{Proceedings of the 2022 Association for the Advancement of Artifical Intellgience (AAAI)}, 2022.

\bibitem{Abhishek2019}
\BIBentryALTinterwordspacing
A.~Singh, P.~Vepakomma, O.~Gupta, and R.~Raskar, ``{Detailed Comparison of Communication Efficiency of Split Learning and Federated Learning},'' arXiv, 2019. [Online]. Available: \url{https://arxiv.org/abs/1909.09145}
\BIBentrySTDinterwordspacing

\bibitem{Houda2023}
H.~Hafi, B.~Brik, P.~A. Frangoudis, A.~Ksentini, and M.~Bagaa, ``{Split Federated Learning for 6G Enabled-Networks: Requirements, Challenges and Future Directions},'' \emph{IEEE Access}, vol.~12, pp. 9890--9930, Jan. 2024.

\bibitem{Han2021}
D.-J. Han, H.~I. Bhatti, J.~Lee, and J.~Moon, ``{Accelerating Federated Learning with Split Learning on Locally Generated Losses},'' in \emph{Proceedings of the 2021 International Conference on Machine Learning (ICML-Workshop of Federated Learning for User Privacy and Data Confidentiality)}, 2021.

\bibitem{Wu2023}
W.~Wu, M.~Li, K.~Qu, C.~Zhou, X.~Shen, W.~Zhuang, X.~Li, and W.~Shi, ``{Split Learning Over Wireless Networks: Parallel Design and Resource Management},'' \emph{IEEE Journal on Selected Areas in Communications}, vol.~41, no.~4, pp. 1051--1066, April 2023.

\bibitem{Yujia2023}
\BIBentryALTinterwordspacing
Y.~Mu and C.~Shen, ``{Communication and Storage Efficient Federated Split Learning},'' arXiv, 2023. [Online]. Available: \url{https://arxiv.org/abs/2302.05599v1}
\BIBentrySTDinterwordspacing

\bibitem{Lin2024}
Z.~Lin, G.~Zhu, Y.~Deng, X.~Chen, Y.~Gao, K.~Huang, and Y.~Fang, ``{Efficient Parallel Split Learning over Resource-constrained Wireless Edge Networks},'' \emph{IEEE Transactions on Mobile Computing}, pp. 1--16, to appear, 2024.

\bibitem{Zhan2022}
Y.~Zhan, J.~Zhang, Z.~Hong, L.~Wu, P.~Li, and S.~Guo, ``{A Survey of Incentive Mechanism Design for Federated Learning},'' \emph{IEEE Transactions on Emerging Topics in Computing}, vol.~10, no.~2, pp. 1035--1044, 2022.

\bibitem{Zhu2024}
G.~Zhu, Y.~Deng, X.~Chen, H.~Zhang, Y.~Fang, and T.~F. Wong, ``{ESFL: Efficient Split Federated Learning over Resource-Constrained Heterogeneous Wireless Devices},'' \emph{IEEE Internet of Things Journal}, vol.~11, no.~16, pp. 27\,153--27\,166, 2024.

\bibitem{Solat2024}
F.~Solat, J.~Lee, and D.~Nyato, ``{Split Federated Learning-Empowered Energy-Efficient Mobile Traffic Prediction Over UAVs},'' \emph{IEEE Wireless Communications Letters}, vol.~13, no.~11, pp. 3064--3068, 2024.

\bibitem{Kim2020}
J.~Kim, S.~Shin, Y.~Yu, J.~Lee, and K.~Lee, ``{Multiple Classification with Split Learning},'' in \emph{Proceedings of the 2020 International Conference on Smart Media and Applications (SMA)}, 2020.

\bibitem{Vepakomma2019}
P.~Vepakomma, O.~Gupta, A.~Dubey, and R.~Raskar, ``{Reducing Leakage in Distributed Deep Learning for Sensitive Health Data},'' in \emph{Proceedings of the 2019 International Conference on Learning Representations (ICLR- AI for Social Good Workshop)}, 2019.

\bibitem{Lee2024a}
J.~Lee, M.~Seif, J.~Cho, and H.~V. Poor, ``{Exploring the Privacy-Energy Consumption Tradeoff for Split Federated Learning},'' \emph{IEEE Network}, vol.~38, no.~6, pp. 388--395, May 2024.

\bibitem{Lee2024b}
------, ``{Optimizing Privacy and Latency Tradeoffs in Split Federated Learning over Wireless Networks},'' \emph{IEEE Wireless Communications Letters}, pp. 1--1, to appear, 2024.

\bibitem{Huang2023}
\BIBentryALTinterwordspacing
C.~Huang, G.~Tian, and M.~Tang, ``{When MiniBatch SGD Meets SplitFed Learning:Convergence Analysis and Performance Evaluation},'' arXiv, 2023. [Online]. Available: \url{https://arxiv.org/abs/2308.11953}
\BIBentrySTDinterwordspacing

\bibitem{Wang2020}
H.~Wang and J.~Xie, ``{User Preference Based Energy-Aware Mobile AR System with Edge Computing},'' in \emph{Proceedings of the 2020 IEEE Conference on Computer Communications (INFOCOM)}, 2020, pp. 1379--1388.

\bibitem{Liu2018a}
Q.~Liu, S.~Huang, J.~Opadere, and T.~Han, ``{An Edge Network Orchestrator for Mobile Augmented Reality},'' in \emph{Proceedings of the 2018 IEEE Conference on Computer Communications (INFOCOM)}, 2018, pp. 756--764.

\bibitem{Niyato_2016}
D.~Niyato, M.~A. Alsheikh, P.~Wang, D.~I. Kim, and Z.~Han, ``{Market model and optimal pricing scheme of big data and Internet of Things},'' in \emph{{in Proc. IEEE Int. Conf. Commun. (ICC)}}, May 2016, pp. 1--6.

\bibitem{Luo2020}
S.~Luo, X.~Chen, Q.~Wu, Z.~Zhou, and S.~Yu, ``{HFEL: Joint Edge Association and Resource Allocation for Cost-Efficient Hierarchical Federated Edge Learning},'' \emph{IEEE Transactions on Wireless Communications}, vol.~19, no.~10, pp. 6535--6548, 2020.

\bibitem{Kim2022}
J.~Kim, D.~Kim, J.~Lee, and J.~Hwang, ``{A Novel Joint Dataset and Computation Management Scheme for Energy-Efficient Federated Learning in Mobile Edge Computing},'' \emph{IEEE Wireless Communications Letters}, vol.~11, no.~5, pp. 898--902, 2022.

\bibitem{Doyeon2023}
D.~Doyeon, S.~Shin, J.~Jeong, and J.~Lee, ``{Joint Edge Server Selection and Dataset Management for Federated Learning-Enabled Mobile Traffic Prediction},'' \emph{IEEE Internet of Things Journal}, vol.~11, no.~3, pp. 4971--4986, Feb. 2024.

\bibitem{Tran2019}
N.~H. Tran, W.~Bao, A.~Zomaya, M.~N.~H. Nguyen, and C.~S. Hong, ``{Federated Learning over Wireless Networks: Optimization Model Design and Analysis},'' in \emph{Proceedings of the 2019 IEEE Conference on Computer Communications (INFOCOM)}, 2019, pp. 1387--1395.

\bibitem{Basar1999}
T.~Basar and G.~J. Olsder, \emph{{Dynamic Noncooperative Game Theory, 2nd}}.\hskip 1em plus 0.5em minus 0.4em\relax Philadelphia, PA, USA: SIAM, 1999.

\bibitem{Osborne1994}
M.~J. Osborne and A.~Rubinstein, \emph{{A Course in Game Theory}}.\hskip 1em plus 0.5em minus 0.4em\relax Cambridge, Jun. 1994.

\bibitem{Roh2013a}
H.~Roh, C.~Jung, W.~Lee, and D.~Z. Du, ``{Resource Pricing Game in Geo-distributed Clouds},'' in \emph{Proceedings of the 2013 IEEE Conference on Computer Communications (INFOCOM)}, 2013, pp. 1519--1527.

\bibitem{Lee2015}
J.~Lee, J.~Guo, J.~K. Choi, and M.~Zukerman, ``{Distributed Energy Trading in Microgrids: A Game-Theoretic Model and Its Equilibrium Analysis},'' \emph{IEEE Transactions on Industrial Electronics}, vol.~62, no.~6, pp. 3524--3533, Jun. 2015.

\bibitem{krizhevsky2009learning}
A.~Krizhevsky, G.~Hinton \emph{et~al.}, ``{Learning Multiple Layers of Features from Tiny Images},'' Master's thesis, University of Tronto, 2009.

\bibitem{xiao2017/online}
H.~Xiao, K.~Rasul, and R.~Vollgraf, ``{Fashion-MNIST: a Novel Image Dataset for Benchmarking Machine Learning Algorithms},'' arXiv, 2017.

\bibitem{Kalra2023}
S.~Kalra, J.~Wen, J.~C. Cresswell, M.~Volkovs, and H.~R. Tizhoosh, ``{Decentralized Federated Learning Through Proxy Model Sharing},'' \emph{Nature Communications}, vol.~14, no. 2899, May 2023.

\bibitem{dwork2014algorithmic}
C.~Dwork, A.~Roth \emph{et~al.}, ``{The Algorithmic Foundations of Differential Privacy},'' \emph{Foundations and Trends{\textregistered} in Theoretical Computer Science}, vol.~9, no. 3--4, pp. 211--407, 2014.

\end{thebibliography}

\end{document}